\newcolumntype{R}{>{\raggedright\arraybackslash}X}
\def\ud{\mathrm{d}}
\newtheorem{theorem}{Theorem}
\newtheorem{proposition}{Proposition}
\title{Explicit investment rules with time-to-build and uncertainty%
\footnote{This study was supported by FiME (Laboratoire de Finance des March\'es de l'Energie) and
the ``Finance et D\'eveloppement Durable - Approches Quantitatives'' Chair. }}
\author{
Ren\'e Aid%
\footnote{EDF R\&D and www.fime-lab.org. rene.aid@edf.fr}
\qquad
Salvatore Federico%
\footnote{Universit\`a degli Studi di Milano. salvatore.federico@unimi.it}
%LPMA, Université Paris-Diderot, Roma University and Alma Research.}
\qquad
Huy\^en Pham%
\footnote{LPMA, Universit\'e Paris-Diderot. pham@math.univ-paris-diderot.fr}
\qquad
Bertrand Villeneuve\footnote{LEDa, Universit\'e Paris-Dauphine. bertrand.villeneuve@dauphine.fr}}
\begin{document}

\maketitle

\begin{abstract}
We establish explicit socially optimal rules for an irreversible investment decision
with time-to-build and uncertainty.
Assuming a price sensitive demand function with a random intercept,
we provide comparative statics and economic interpretations
for three models of demand
(arithmetic Brownian, geometric Brownian, and the Cox-Ingersoll-Ross).
Committed capacity, that is, the installed capacity plus the investment in the pipeline,
must never drop below the best predictor of future demand, minus two biases.
The discounting bias takes into account the fact that investment is paid upfront for future use;
the precautionary bias multiplies a type of risk aversion index by the local volatility.
Relying on the analytical forms, we discuss in detail the economic effects.
\end{abstract}

\smallskip

\noindent
\textbf{Keywords:}
optimal capacity;
irreversible investments;
singular stochastic control;
time-to-build;
delay equations.

\smallskip

\noindent
\textbf{AMS Classification}: 
93E20, %Optimal stochastic control
49J40, %Variational methods including variational inequalities
%91B32, %Resource and cost allocation
91B38. %	Production theory, theory of the firm
%91B70. %Stochastic models

\smallskip

\noindent
\textbf{JEL  Classification}:  
C61; %	Optimization Techniques - Programming Models - Dynamic Analysis 
D92; %	Intertemporal Firm Choice, Investment, Capacity, and Financing 
E22. %	Capital - Investment - Capacity.
%L94. %	Electric Utilities

%\newpage
%\tableofcontents

%%%%%%%%%%%%%%%%%%%%%%%%%%%%%%%%%%%%%%%%%%%%%%%%%%%%%%%%%%%%%%%%%%%%%%%%%%%%%%%%%%%%%%%%%
%%%%%%%%%%%%%%%%%%%%%%%%%%%%%%%%%%%%%%%%%%%%%%%%%%%%%%%%%%%%%%%%%%%%%%%%%%%%%%%%%%%%%%%%%
\section{Introduction}

How to track demand when the time-to-build retards capacity expansion?
When to invest and how much?
We answer these questions with a model of irreversible investment. In this model, the objective
of the decision-maker is to minimize the total discounted social cost. 
That is, the decision-maker maximizes the standard microeconomic social surplus.
We are able to show in particular that the solution is implementable as a competitive equilibrium.
We are able to calculate explicit, compact, decision rules.

In many capitalistic industries, construction delays are essential. 
In this paper, we focus on electricity generation. In this sector, construction delays can be considerable: 
they could be only one year for a small wind-farm 
but could be three years for a gas turbine 
and eight to ten years for a nuclear plant.
The scenarios of the evolution of demand 
with their trends, their drag force, and their stochastic parts require
particular attention.
To this purpose, 
we develop the comparative statics and economic interpretations for three demand models applied to electricity generation.
The intercept of the price sensitive demand function follows
either an arithmetic Brownian motion as in \citet{Bar-Ilan02},
or a geometric Brownian motion as in \citet{Bar-Ilan96} and \citet{Aguerrevere03},
or the Cox-Ingersoll-Ross (CIR) model.
The latter is a mean-reverting process, and, to our knowledge, 
no real options investment model exists 
in the literature with time-to-build and a process of this type.
The basic existence and regularity results are provided in
a companion paper (\citet{Federico14}), 
but we simplify the specification for the sake of calculability.

An exact decision rule facilitates the clear understanding of the effects at play.
The decision rule stipulates what the committed capacity should be, 
that is, the installed capacity plus capacity under construction. 
The action rule, given the current conditions, is that the committed capacity must
not fall below
the best predictor of demand after the delay, minus two biases.
The first bias is a pure discounting bias unrelated to uncertainty: 
because the investment is paid for upfront but only produces after the delay,
the required committed capacity is reduced.
The second one is a precautionary bias where a risk aversion index is
multiplied by local volatility .

Because of this decomposition, we find the following qualitative results.
For investment with a long construction horizon, we find that
uncertainty has very little importance compared to the trend of demand growth.
Thus, an error on the long-term trend is much more harmful than an error on volatility.
For investment with a short construction horizon, the opposite is true:
decision-makers should pay greater attention to volatility.
We also illustrate the practical importance of a possible saturation of the demand
with the CIR model.
The investors' behavior is very different depending on whether demand
is above or below the long-run average, or target.
When demand is above the target, the investor is almost insensitive to the current demand, 
except if the return speed is very slow.
Below the target, the comparison between the time-to-build and the expected time-to-target
is critical: if the time-to-build is longer, then the optimal committed capacity is practically the
target itself minus the biases; if the time-to-build is shorter, then the investors observe the process and invest progressively.

The literature on the topic provides a number of insights.
Table~\ref{tab:literature} provides a tentative classification.
The competitive pressure matters: competition kills the value of waiting and thus
accelerates investment. 
\citet{Grenadier00,Grenadier02} and \citet{Pacheco03} follow this line of thought.
We exclusively use a competitive market and show that this effect is completely internalized.
The seminal work \citet{McDonald86} on the option to wait in the case of irreversible decisions
shows that uncertainty has a negative effect on investment.
Strong support for this result is that
with greater volatility, investment is triggered by a higher current product price, i.e. a smaller probability of a market downturn.
Several extensions provide conditions under which this result does not hold 
or might be mitigated.
Construction delays, 
that is, the time between the decision and the availability of the new capacity, 
have attracted the attention of economists.
In particular, the models
in \citet{Bar-Ilan96}, \citet{Bar-Ilan02}, and \citet{Aguerrevere03}
exhibit situations where an increase in uncertainty leads to an increase in investment.

The models that exhibit a positive effect on investment from an increase in
uncertainty, do so only for a specific range of parameters.
Besides, the quantitative effects are very small.
\citet{Bar-Ilan02} show
in their simulations that when the uncertainty on demand is multiplied by five, then
the investment threshold moves only by 1\%. And as the authors themselves point out,
the investment thresholds are nearly independent of the level of uncertainty.
The large effects found in \citet{Majd87} are reconsidered in \citet{Milne2000}.

In \citet{Aguerrevere03}, a paper with which we share most of the modeling choices,
the production is flexible, although the capacity accumulation is not.
Investors keep the choice to produce only when it is profitable, and thus
the rigidity of investment is attenuated by the option to produce or not.
The capacity reserves are all the more profitable the longer the time-to-build.
In consequence, uncertainty tends to increase the investment rate.
This paper is significant because of the way it integrates meaningful economic questions,
and the numerical simulations are instructive.

As far as electricity production is concerned, the flexibility of the base production
is limited either for technological reasons (nuclear plants) or because
the fixed cost per idle period are important (coal- or gas-fired power plants).
In which case, the cost difference between producing or not is narrow.
Our approach fills a gap in the literature.

\begin{table}
\begin{center}
{\small
\begin{tabular}{lccc}
Paper                  &Objective      &Competition         & Investment          \\ \hline
                       &               &                    &                     \\
\citealt{Majd87}       &firm           &no                  & irreversible        \\
\citealt{Bar-Ilan96}   &firm           &no                  & reversible          \\
\citealt{Grenadier00}  &firm           &perfect             & irreversible        \\
\citealt{Bar-Ilan02}   &planner        &no                  & irreversible        \\
\citealt{Grenadier02}  &firm           &imperfect           & irreversible        \\
\citealt{Aguerrevere03}&planner/firm   &perfect/imperfect   & irreversible with   \\
                       &               &                    &flexible production     
\end{tabular}
} %fin du \scriptsize
\end{center}
\caption{Papers on investment with uncertainty and time-to-build.}
\label{tab:literature}
\end{table}

This paper is organized as follows.
Section \ref{sec:model} describes and justifies our modeling approach.
Solutions and general properties are provided in Section~\ref{sec:solution}.
We give the expression of the decision rule and we show that
the solution to the optimization program can be decentralized as a competitive equilibrium.
The economic analysis of the joint effect of time-to-build  and uncertainty is given
in Section \ref{sec:GBM} for the a geometric Brownian motion, and
in Section~\ref{sec:CIR} for the CIR model.
Section \ref{sec:conclusion} concludes.

For information on the popular arithmetic Brownian motion application, please see Appendix \ref{sec:ABM}.

%%%%%%%%%%%%%%%%%%%%%%%%%%%%%%%%%%%%%%%%%%%%%%%%%%%%%%%%%%%%%%%%%%%%%%%%%%%%%%%%%%%%%%%%%
%%%%%%%%%%%%%%%%%%%%%%%%%%%%%%%%%%%%%%%%%%%%%%%%%%%%%%%%%%%%%%%%%%%%%%%%%%%%%%%%%%%%%%%%%
\section{The model}              \label{sec:model}

We set up a model of an irreversible investment decision in which the objective
is to track the current demand of electricity as closely as possible.

%%%%%%%%%%%
\paragraph{1.} The inverse demand function at date $t$ is
\begin{align}             \label{eq:demand-price}
p_t(Q) = \eta + \theta (D_t - Q),
\end{align}
with $\eta\geq0$ and $\theta>0$, where $p$ is the price and $Q$ is the output.%
\footnote{\citet{Aguerrevere03} takes a similar form and discusses its flexibility.}
The  (quasi) intercept $(D_t)_{t\geq 0}$ is a diffusion that satisfies the SDE
\begin{align}     \label{eq:demand}
\begin{cases}
\ud D_t=\mu (D_t){{\ud}t}+\sigma (D_t)\ud W_t,\\
D_0=d,
\end{cases}
\end{align}
where $(W_t)_{t\geq 0}$ is a Brownian motion on some filtered probability space
$(\Omega,\mathcal{F},(\mathcal{F}_t)_{t\geq 0}, \mathbb{P})$.
Without loss of generality, we suppose that  the filtration $(\mathcal{F}_t)_{t\geq 0}$
is the one generated by the Brownian motion $W$ and enlarged by the $\mathbb{P}$-null sets.

%%%%%%%%%%%
\paragraph{2.}
There is a time lag $h>0$ between the date of the investment
decision and the date when the investment is completed and becomes productive.
Thus, the investment  decision at time $t$ brings additional capacity at time $t+h$.

%%%%%%%%%%%
\paragraph{3.}
At time $t=0$, there is an initial stream of pending investments initiated in the interval $[-h,0)$
that are going to be completed in the interval $[0,h)$.
The function that represents the cumulative investment planned
in the interval $[-h,s]$, $s\in (-h,0)$, is  a nonnegative non-decreasing c\`adl\`ag function.
Therefore, the set where this function lives is
\begin{align}%test 
\mathcal{I}^0=\{I^0:[-h,0)\rightarrow \mathbb{R}^+, \ s\mapsto I^0_s \mbox{ c\`adl\`ag, non-decreasing}\}.
\end{align}%test 
We set
\begin{align}%test 
 I^0_{0^-}=\lim_{s\uparrow 0}I^0_s,\qquad I^0\in\mathcal{I}^0.
\end{align}%test 

%%%%%%%%%%%
\paragraph{4.}
The decision variable is represented by a c\`adl\`ag non-decreasing
$(\mathcal{F}_t)_{t\geq 0}$-adapted process $(I_t)_{t\geq 0}$
that represents the cumulative investment from time $0$ up to time $t\geq 0$.
Therefore, formally $\ud I_t$ is the investment at time $t\geq 0$.
The set of admissible strategies, which we denote by $\mathcal{I}$, is  the set
\begin{align}%test 
\mathcal{I}= \{I:\mathbb{R}^+\times \Omega\rightarrow \mathbb{R}^+,
             \ I \ \mbox{c\`adl\`ag, $(\mathcal{F}_t)_{t\geq 0}$-adapted, nondecreasing}\}.
\end{align}%test 

%%%%%%%%%%%
\paragraph{5.}
Due to the considerations above, given $I^0\in\mathcal{I}^0$, $I\in \mathcal{I}$ and setting $I_{-h^-}=0$,
we assume that the production capacity $(K_t)_{t\geq 0}$
follows the controlled dynamics driven by the state equation
\begin{align}\label{statecontrol}
\begin{cases}
\displaystyle{\ud K_t=\ud I_{t-h}}, & \\
K_{0^-}=k,     \qquad I_s=I^0_s,\ s\in[-h,0).
\end{cases}
\end{align}
The equation above is a controlled locally deterministic differential equation
with delay in the control variable.
By  solution to this equation, we mean the the c\`adl\`ag process:
\begin{align}%test 
 K_t=k+\bar{I}_{t-h}, \quad\forall t\geq 0,
\end{align}%test 
 where $\bar{I}$ is the process
\begin{align}%test 
\bar{I}_t=\begin{cases}
        I^0_t,               &  t\in[-h,0),\\
        I^0_{0^-}+I_t, \quad &  t\geq 0.
\end{cases}
\end{align}%test 
Significantly, the randomness in \eqref{statecontrol} enters only through $I$,
and there are no stochastic integrals.

%%%%%%%%%%%
\paragraph{6.}
The objective is to minimize over  $I\in \mathcal{I}$ the functional
\begin{align}            \label{functional}
F(k,d,I^0;I)=\mathbb{E}
\left[\int_0^{+\infty} e^{-\rho t} \left(\frac{1}{2}(K_t-D_t)^2{{\ud}t}+q_0\ud I_t\right)\right],
\end{align}
where $q_0>0$ is the unit investment cost.

\bigskip

%%%%%%%%%%%
\paragraph{Economic interpretation of the loss function.}
The program is a maximization of the social surplus or conversely the minimization of the deadweight loss.
Given \eqref{eq:demand-price}, 
the instantaneous net consumers' surplus is by standard definition:
\begin{align}%test 
S_t =  \int_{0}^{K_t} \left( \eta+ \theta(D_t - q) \right) \, dq - p_t K_t.
\end{align}%test 
This is the sum of the values given to each unit consumed minus the price paid for them.
If the unit production cost is $\eta$ and if there is some fixed cost $F$ per year,
the instantaneous producer's profit $\pi_t$ is $(p_t - \eta) K_t - F$.
The total instantaneous surplus $\text{TS}_t =   S_t + \pi_t $ is therefore:
\begin{align}%test 
\text{TS}_t = {} & \theta  \int_{0}^{K_t} (D_t - q) \, dq  - F \\
   = {} &
\underbrace{- \frac{\theta }{2 } \left(K_t - D_t  \right)^2}_{\mbox{Depends on control}}
+ \underbrace{ \frac{\theta}{2 } D_t^2 - F}_{\mbox{Doesn't}} .
\end{align}%test 
Maximizing the discounted social surplus minus the investment costs
amounts to program \eqref{functional}, where the true investment cost $q_0$ is divided by
$\theta$.

Thus, the solution and \eqref{eq:demand-price} generate an electricity price process.
It reflects the marginal cost plus a term, which can be negative,
that reflects tension in the market.

\paragraph{Diffusion process.}
The process $D$ satisfies the following conditions:%
\footnote{A reference for the theory of one-dimensional diffusions is \citet{Karatzas91}.}
we assume that the coefficients $\mu,\sigma:\mathbb{R}\rightarrow\mathbb{R}$ in \eqref{eq:demand} are continuous
with sublinear growth and regular enough to ensure the existence of a unique strong solution to \eqref{eq:demand}.
Further, we assume that this solution takes values in an open set $\mathcal{O}$ of $\mathbb{R}$
and that it is non-degenerate over this set, that is, $\sigma^2>0$ on $\mathcal{O}$.
In the example we shall discuss in the next section, the set $\mathcal{O}$ will be $\mathbb{R}$ or $(0,+\infty)$.
We observe that, due to the assumption of sublinear growth of $\mu,\sigma$,
standard estimates in SDEs (see, e.g., \citet[Ch.\,II]{Krylov80})
show that there exist  $\kappa_{0}, \kappa_1$ depending on $\mu,\sigma$  such that
\begin{align}            \label{growth2}
\mathbb{E}\left[|D_t|^2\right]\leq \kappa_{0}(1+|d|^2)e^{\kappa_{1} t},
\qquad t\geq 0.
\end{align}

%%%%%%%%%%%%%%%%%%%%%%%%%%%%%%%%%%%%%%%%%%%%%%%%%%%%%%%%%%%%%%%%%%%%%%%%%%%%%%%%%%%%%%%%%
%%%%%%%%%%%%%%%%%%%%%%%%%%%%%%%%%%%%%%%%%%%%%%%%%%%%%%%%%%%%%%%%%%%%%%%%%%%%%%%%%%%%%%%%%
\section{Solution}             \label{sec:solution}

The problems with delay are by nature of infinite dimension.
Referring to our case, the functional $F$ defined in \eqref{functional} depends not only on the initial $k$
but also on the past  of the control $I^0$, which is a function.
Nevertheless, the problem can be reformulated
in terms of another one-dimensional state variable not affected by the delay.
We rewrite the objective functional to introduce a new state variable,
the so-called \emph{committed capacity}.

The idea of the reformulation in control problems with delay is contained in \citet{Bar-Ilan02}
%(cf.\ also \citet{Oksendal08}) 
(cf.\ also \citet{Bruder2009}) 
in the context of optimal stochastic impulse problems.
Here, we develop this idea for singular stochastic control.
It is worth stressing that, unlike \citet{Bar-Ilan02},
we simplify the approach by working not on the value function of the optimization problem
but directly on the basic functional.

%%%%%%%%%%%%%%%%%%%%%%%%%%%%%%%%%%%%%%%%%%%%%%%%%%%%%%%%%%%%%%%%%%%%%%%%%%%%%%%%%%%%%%%%%
\subsection{Reduction to a problem without delay}
For the case of the domain for the couple of variables $(k,d)$ of our problem, the set is:%:
\footnote{The real problem is meaningful for $k\geq 0$;
nevertheless, it is convenient from the  mathematical point of view to allow the case of $k<0$.
Because the problem is irreversible and starts from $k\geq 0$, the capital remains nonnegative.}
\begin{align}%test 
{\mathcal{S}}=\mathbb{R}\times {\mathcal{O}}.
\end{align}%test 
Define the  committed capacity as:
\begin{align}%test 
C_t:=K_t+\bar{I}_t-\bar{I}_{t-h}=K_{t+h}.
\end{align}%test 
In differential form, the dynamics of $C_t$ is
\begin{align}            \label{committed}
\begin{cases}
dC_t=\ud I_t,\\
C_{0^-}=c=k+I^0_{0^-}.
\end{cases}
\end{align}
Therefore, it does not contain the delay in the control variable.

From now on, the dependence of $K$ on $k, I^0, I$; the dependence of $C$ on $c,I$;
and the dependence of $D$ on $d$ is denoted respectively as $K^{k,I^0,I}$, $C^{c,I}$, and $D^d$.
\smallskip

The crucial facts that allow the removal of the delay are the following.
\begin{enumerate}
\item The committed capacity is $(\mathcal{F}_t)_{t\geq 0}$-adapted.
This is due to the special structure of the controlled dynamics of $K$
that makes $K_{t+h}^{k,I^0,I}$ known given the information  $\mathcal{F}_t$.
\item Within the interval $[0,h)$, the control $I$ does not affect the dynamics of $K^{k,I^0,I}$,
which is (deterministic and) fully determined by $I^0$.
In other words, $K^{k,I^0,I^{(1)}}_t= K^{k,I^0,I^{(2)}}_t$ for every $t\in[0,h)$ and every $I^{(1)},I^{(2)}\in \mathcal{I}$.
Therefore, we can write without ambiguity $K^{k,I^0}_t$ for $t\in[0,h)$
to refer to the ``controlled'' process $K$ within the interval $[0,h)$.
\end{enumerate}
Given these observations, we have the following:
\begin{proposition}                   \label{prop:reduction}
\begin{align}
F(k,d,I^0;I) =
 \mathbb{E}\left[\int_0^{+\infty} e^{-\rho t}
    \big(g(C_t^{c,I},D_t^d){{\ud}t}+q_0\ud I_t\big)\right] +J(k,d,I^0),      \label{fun2}
\end{align}
where
\begin{align}%test 
J(k,d,I^0) = \frac{1}{2}\,\mathbb{E}\left[\int_0^{h} e^{-\rho t}
       \left(K_t^{k,I^0}-D_t^d\right)^2{{\ud}t}\right],
\end{align}%test 
and $g:\mathcal{S}\rightarrow\mathbb{R}^+$ is defined by
\begin{align}%test 
g(c,d):&=\frac{1}{2}e^{-\rho h}\mathbb{E}\left[(c-D^d_h)^2\right] 
         \nonumber \\
       &=\frac{1}{2}e^{-\rho h}(c^2- 2\beta_0(d)c+\alpha_0(d)), 
         \label{ggg}
\end{align}%test 
where
\begin{align}               \label{beta0}
  \alpha_0(d):=&\mathbb{E}\big[\big|D_h^d\big|^2\big], \qquad \beta_0(d):=\mathbb{E} \big[D_h^d\big].
\end{align}
\end{proposition}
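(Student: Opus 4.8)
The plan is to split the running cost in \eqref{functional} at the delay $h$, discard the part on $[0,h)$ via the control-independence of $K$ there, and rewrite the part on $[h,+\infty)$ in terms of the committed capacity $C$ using the Markov property of the demand process $D$.

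First I would write $\int_0^{+\infty}=\int_0^h+\int_h^{+\infty}$ inside the expectation. On $[0,h)$ the capacity $K$ does not depend on the control $I$ (the second of the two facts noted above), so $K_t=K_t^{k,I^0}$ there and
\[
\frac12\,\mathbb{E}\Big[\int_0^h e^{-\rho t}(K_t-D_t)^2\,\ud t\Big]=J(k,d,I^0),
\]
which is precisely the control-free term. For the tail, the identity $C_t=K_{t+h}$ gives $K_t=C_{t-h}^{c,I}$ for $t\ge h$; performing the pathwise change of variable $s=t-h$ in the (purely Lebesgue, since \eqref{statecontrol} carries no stochastic integral) time integral yields
\[
\frac12\,\mathbb{E}\Big[\int_h^{+\infty} e^{-\rho t}(K_t-D_t)^2\,\ud t\Big]
=\tfrac12 e^{-\rho h}\,\mathbb{E}\Big[\int_0^{+\infty} e^{-\rho s}\big(C_s^{c,I}-D_{s+h}^d\big)^2\,\ud s\Big].
\]

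Next I would condition on $\mathcal{F}_s$ inside the $s$-integral. Since the integrand is nonnegative, Tonelli's theorem permits interchanging $\mathbb{E}$ with $\int_0^{+\infty}\ud s$ (both sides valued in $[0,+\infty]$), and then the tower property applies pointwise in $s$. The committed capacity $C_s^{c,I}$ is $\mathcal{F}_s$-measurable (the first of the two facts: $K_{s+h}^{k,I^0,I}=c+I_s$), while $D$ solves the autonomous SDE \eqref{eq:demand} and is hence a time-homogeneous Markov process, so $\mathbb{E}[D_{s+h}\mid\mathcal{F}_s]=\beta_0(D_s)$ and $\mathbb{E}[D_{s+h}^2\mid\mathcal{F}_s]=\alpha_0(D_s)$ with $\alpha_0,\beta_0$ as in \eqref{beta0}, both finite by \eqref{growth2}. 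Expanding the square,
\[
\mathbb{E}\big[(C_s-D_{s+h})^2\mid\mathcal{F}_s\big]=C_s^2-2\beta_0(D_s)C_s+\alpha_0(D_s),
\]
so $\tfrac12 e^{-\rho h}$ times the right-hand side is exactly $g(C_s,D_s)$ by \eqref{ggg}. Collecting the two pieces and restoring the untouched $q_0\,\ud I_t$ term gives \eqref{fun2}.

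The argument is essentially bookkeeping, and working directly with the functional rather than with a value function is what keeps it painless: all integrands are nonnegative, so Tonelli and the tower property apply with no a priori integrability assumption on $I$ --- if $F=+\infty$, both sides of \eqref{fun2} are simultaneously $+\infty$. The two points that genuinely carry the proof are the measurability statement that $C^{c,I}$ is $(\mathcal{F}_t)$-adapted --- exactly where the non-anticipative structure of the delayed state equation \eqref{statecontrol} enters --- and the clean use of the Markov property to identify the conditional first and second moments of $D_{s+h}$; I do not anticipate any real obstacle beyond phrasing these two points carefully.
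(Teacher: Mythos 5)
Your proposal is correct and follows essentially the same route as the paper's proof: split the time integral at $h$, absorb the control-independent piece on $[0,h)$ into $J$, shift time by $h$ on the tail, and use the adaptedness of $C$ together with the time-homogeneous Markov property of $D$ to identify $\tfrac12 e^{-\rho h}\,\mathbb{E}\big[(C_t-D_{t+h})^2\,\big|\,\mathcal{F}_t\big]$ with $g(C_t,D_t)$. The only difference is cosmetic --- the paper runs the conditioning argument starting from $\mathbb{E}[g(C_t,D_t)]$ rather than from the shifted loss, and leaves the Tonelli interchange implicit.
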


\begin{proof}
Using the definition of $g$, the time-homogenous property of $D$, we have:
\begin{align}%test 
\mathbb{E}\left[g(C_t^{c,I},D_t^d)\right]
&= \frac{1}{2}e^{-\rho h}\mathbb{E}\left[\mathbb{E}\left[(c'-D_h^{d'})^2\right]\Big|_{c'=C_t^{c,I}, \ d'=D_t^d}\right] \nonumber\\
&= \frac{1}{2}e^{-\rho h}\mathbb{E}\left[\mathbb{E}\left[(C_t^{c,I}-D^d_{t+h})^2\, | \, \mathcal{F}_t\right]\right] \nonumber \\
&= \frac{1}{2}e^{-\rho h}\mathbb{E}\left[(C_t^{c,I}-D^d_{t+h})^2\right]  \nonumber \\
&= \frac{1}{2}e^{-\rho h}\mathbb{E}\left[(K_{t+h}^{k,I^0,I}-D^d_{t+h})^2\right]. %\nonumber
\end{align}%test 
Therefore, \eqref{functional} can be rewritten as
\begin{align}%test 
F(k,d,I^0;I) =
{} & \mathbb{E}\left[\int_{[0,h)} e^{-\rho t}
     \left(\frac{1}{2}\left(K_t^{k,I^0,I}-D_t^d\right)^2{{\ud}t}+q_0\ud I_t\right)\right]   \nonumber \\
     {} & +  \mathbb{E}\left[\int_{[h,+\infty)} e^{-\rho t}
     \left(\frac{1}{2}\left(K_t^{k,I^0,I}-D_t^d\right)^2{{\ud}t}+q_0\ud I_t\right)\right]   \nonumber \\
     = {}&   \mathbb{E}\left[\int_{[0,h)} e^{-\rho t}
     \left(\frac{1}{2}\left(K_t^{k,I^0,I}-D_t^d\right)^2{{\ud}t}+q_0\ud I_t\right)\right]   \nonumber\\
{} & +\mathbb{E}\left[\int_0^{+\infty} e^{-\rho (t+h)}
   \left(\frac{1}{2}\left(K_{t+h}^{k,I^0,I}-D_{t+h}^d\right)^2{{\ud}t}+q_0\ud I_{t+h}\right)\right] \nonumber \\
= {} &
       \mathbb{E}\left[\int_0^{+\infty} e^{-\rho t}
       \big(g(C_t^{c,I},D_t^d){{\ud}t}+q_0\ud I_t\big)\right]+ J(k,d,I^0).    %  \label{fun3}
\end{align}%test
\end{proof}
Thus, the functional $J(k,d,I^0)$ defined in Proposition \ref{prop:reduction} does not depend on $I\in\mathcal{I}$.
Therefore, by setting
\begin{align}%test   %  \label{GG}
  G(c,d;I):=\mathbb{E}\left[\int_0^{+\infty}e^{-\rho t} \big(g(C_t^{c,I},D_t^d)+q_0\ud I_t\big)\right],
\end{align}%test 
the original optimization problem of minimizing $F(k,d,I^0;\cdot)$ over $\mathcal{I}$
is equivalent to the optimization problem  \emph{without delay}
\begin{align}    \label{committedfun}
v(c,d):=   \inf_{I\in\mathcal{I}}\ G(c,d;I) \ \mbox{ subject to }  (\ref{committed})  \mbox{ and }  (\ref{eq:demand}).
\end{align}

%%%%%%%%%%%%%%%%%%%%%%%%%%%%%%%%%%%%%%%%%%%%%%%%%%%%%%%%%%%%%%%%%%%%%%%%%%%%%%%%%%%%%%%%%
\subsection{Solution characterization}

In the sequel, to give sense to the problem (i.e., to guarantee finiteness),
we make the standing assumption that the discount factor $\rho$ satisfies
\begin{align}    \label{rhoK1}
\rho\;  > \; \max (\kappa_1,0),
\end{align}
where  $\kappa_{1}$ is the constant appearing in \eqref{growth2}.
This assumption guarantees that there is some $\kappa$ depending on $\mu,\sigma$ s.t.
\begin{align}%test   %   \label{growthv}
0 \; \leq \; v(c,d)  \; \leq \; \kappa \ (1+|c|^2+|d|^2),
\qquad \forall (c,d)\in \mathcal{S}.
\end{align}%test 
In particular, it implies that the value function $v$ is finite and locally bounded.

\citet{Federico14} prove the following facts.%
\footnote{\citet{Federico14} deal with reversible problems.
We can apply their results by taking an infinite cost of disinvestment.
The irreversible case with a profit maximization criterion is studied with similar generality 
in \citet{Ferrari}.}

\begin{enumerate}
\item $v$ is convex with respect to the variable $c$
\item $v$ is differentiable with respect to $c$, and  $v_c$ is continuous in $\mathcal{S}$
\item The function $d\mapsto v_c(c,d)$ does not increase for each $c\in\mathbb{R}$
\item $v_c\geq - q_0$
\end{enumerate}
 In view of these facts, there is now the  \emph{continuation region}
\begin{align}%test    % \label{mathC}
\mathcal{C}:=\{(c,d)\in {\mathcal{S}} \ | \ v_c(c,d)>-q_0\},
\end{align}%test 
and  the   \emph{action region}
\begin{align}%test    % \label{mathD}
\mathcal{A}:=\{(c,d)\in{\mathcal{S}} \ | \ v_c(c,d)=-q_0\}.
\end{align}%test 
Therefore, $\mathcal{C}$ and $\mathcal{A}$ are disjoint and $\mathcal{S}=\mathcal{C}\cup\mathcal{A}$.
Due to the continuity of $v_c$, the continuation region is an open set of $\mathcal{S}$,
while the action region is a closed set of $\mathcal{S}$.
Moreover, due to the monotonicity of $v_c(c,\cdot)$ and to the convexity of $v(\cdot,d)$,
$\mathcal{C}$ and $\mathcal{A}$  can be rewritten as
\begin{align}%test  %   \label{mathC1}
 \mathcal{C}=\{(c,d) \in {\mathcal{S}} \ | \ c>\hat{c}(d)\}, \quad
 \mathcal{A}=\{(c,d) \in {\mathcal{S}} \ | \ c\leq \hat{c}(d)\},
\end{align}%test 
where $\hat{c}:\mathcal{O}\rightarrow \mathbb{R}$ is a non-decreasing function.
The latter function is the \emph{optimal boundary} for the problem in the sense
that it characterizes the optimal control.
Thus, in this singular stochastic optimal control,
the optimal control consists of keeping the state processes
within the closure of the continuation region $\mathcal{C}$ by reflecting the controlled process
on the optimal boundary along the direction of the control.
See Figure \ref{fig:CA-in-dc}.

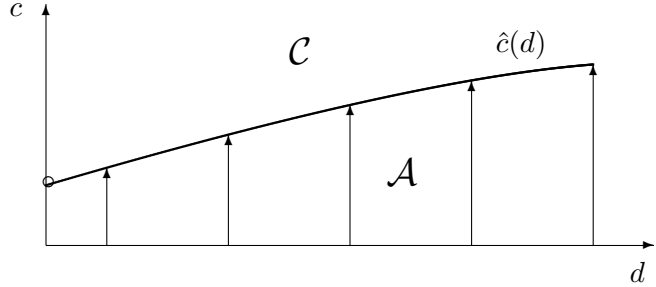
\begin{figure}[ht!]
\begin{center}
\setlength{\unitlength}{1.6cm}
\begin{picture}(6,2.5)(-3,-0.8)
\put(-2.5,-.5){\vector(1,0){5}}
\put(2.3,-0.8){$d$}
\put(-.5,1){$\mbox{\Large{$\mathcal{C}$}}$}
\put(.3,0){$\mbox{\Large{$\mathcal{A}$}}$}
\put(-2.54,-.035){$\circ$}
\put(-2.5,-.5){\vector(0,1){2}}
\put(-2,-0.5){\vector(0,1){0.65}}
\put(-1,-0.5){\vector(0,1){0.92}}
\put(-0,-0.5){\vector(0,1){1.17}}
\put(1,-0.5){\vector(0,1){1.37}}
\put(2,-0.5){\vector(0,1){1.5}}
%\put(3,-0.5){\vector(0,1){1.4}}
\put(1.2,1.1){$\hat{c}(d)$}
\put(-2.8,1.4){$c$}
\thicklines
\qbezier(-2.5,-.001)(0.5,0.8853)(2,1)
\end{picture}
\end{center}
\caption{Continuation region ($\mathcal{C}$) and action region ($\mathcal{A}$)
         in the demand-committed capacity space.}
\label{fig:CA-in-dc}
\end{figure}

We have an explicit characterization of $\hat{c}$, that is, of the optimal control
that is provided by the following result.

\begin{theorem}       \label{prop:structure}
The optimal boundary is explicitly written as
 \begin{align}        \label{optbound}
\hat{c}(d)=\beta_0(d)
           - q_0 \rho e^{\rho h}
           +\frac{1}{2}\sigma^2(d)\,\frac{\beta''(d)\psi'(d)-\beta'(d)\psi''(d)}{\psi'(d)},
 \end{align}
where $\beta_0(d)$ is defined in \eqref{beta0} as $\mathbb{E} \big[D_h^d\big]$,
 \begin{align}%test       %  \label{alphabeta}
  \beta(d):=\int_0^{+\infty}e^{-\rho t}\mathbb{E}[\beta_0(D_t^d)]{{\ud}t},
 \end{align}%test 
and $\psi$ is the strictly  increasing fundamental solution
to the linear ODE
\begin{align}          \label{ODE}
[\mathcal{L}\phi](d):=
  \rho \phi(d)-\mu(d)\phi'(d)-\frac{1}{2}\sigma^2(d) \phi''(d)=0, \qquad d\in\mathcal{O}.
\end{align}

The unique optimal control for the problem \eqref{committedfun} is the process
\begin{align}%test 
I^*_t=\left[\hat{c}\left(\sup_{0\leq s\leq t} D_s^d\right)-c\right]^+.
\label{eq:boundary}
\end{align}%test 
\end{theorem}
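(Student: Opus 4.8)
The plan is to work from the Hamilton--Jacobi--Bellman (HJB) variational inequality satisfied by $v$, differentiate it in the control direction to obtain a \emph{linear} ODE for the marginal value $v_c$, solve that ODE explicitly with the help of the Feynman--Kac representation of $\beta$ and of the fundamental solution $\psi$ of \eqref{ODE}, and then read off $\hat c$ from the value-matching and smooth-fit conditions at the free boundary. Concretely, writing $\mathcal L^D\phi:=\mu(d)\phi_d+\frac12\sigma^2(d)\phi_{dd}$, the facts quoted from \citet{Federico14} (convexity of $v(\cdot,d)$, $C^1$-regularity in $c$, $v_c\ge-q_0$) together with standard singular-control theory yield that $v$ solves
\begin{align*}
\min\big\{\rho v-\mathcal L^D v-g,\ v_c+q_0\big\}=0\qquad\text{on }\mathcal S ,
\end{align*}
so that $\rho v-\mathcal L^D v=g$ on the open set $\mathcal C$ and $v_c=-q_0$ on $\mathcal A$; since $\sigma^2>0$ on $\mathcal O$ and $g$ is smooth, interior elliptic regularity makes $v$ smooth on $\mathcal C$, which legitimates the differentiations below.

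Differentiating $\rho v-\mathcal L^D v=g$ with respect to $c$ on $\mathcal C$ and setting $p:=v_c$ gives the linear equation $\rho p-\mathcal L^D p=g_c(c,d)=e^{-\rho h}\big(c-\beta_0(d)\big)$, an ODE in $d$ with $c$ as a parameter. By Feynman--Kac, $\beta$ solves $[\mathcal L\beta](d)=\beta_0(d)$, so $p^{\mathrm{part}}(c,d):=e^{-\rho h}\big(c/\rho-\beta(d)\big)$ is a particular solution, and the general solution reads $p=p^{\mathrm{part}}+a(c)\psi+b(c)\varphi$, with $\psi$ (increasing) and $\varphi$ (decreasing) the fundamental solutions of \eqref{ODE}. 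The bound $0\le v\le\kappa(1+|c|^2+|d|^2)$ together with convexity of $v(\cdot,d)$ forces $v_c$ to grow at most linearly in $d$; since $\varphi$ explodes at the left endpoint of $\mathcal O$ — which, $\hat c$ being non-decreasing, belongs to the closure of the $c$-section of $\mathcal C$ for every $c$ — we must have $b\equiv0$, hence
\begin{align*}
v_c(c,d)=e^{-\rho h}\Big(\frac{c}{\rho}-\beta(d)\Big)+a(c)\,\psi(d),\qquad(c,d)\in\mathcal C .
\end{align*}

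On $\partial\mathcal C=\{c=\hat c(d)\}$ I would impose value matching $v_c=-q_0$ and smooth fit $\partial_d v_c=0$ (the latter because $v_c\equiv-q_0$ on $\mathcal A$ and $v_{cd}$ is continuous across the boundary). Smooth fit gives $a(\hat c(d))=e^{-\rho h}\beta'(d)/\psi'(d)$; substituting into value matching gives $\hat c(d)/\rho=\beta(d)-\beta'(d)\psi(d)/\psi'(d)-q_0e^{\rho h}$, and replacing $\rho\beta=\beta_0+\mu\beta'+\frac12\sigma^2\beta''$ (from $\mathcal L\beta=\beta_0$) and $\rho\psi=\mu\psi'+\frac12\sigma^2\psi''$ (from $\mathcal L\psi=0$) produces exactly \eqref{optbound} after simplification. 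For the control: since $C$ is non-decreasing, $D$ moves in both directions, and $\hat c$ is non-decreasing, keeping $(C_t,D_t)$ inside $\overline{\mathcal C}=\{c\ge\hat c(d)\}$ with the least possible cumulative investment forces $C^*_t=\max\{c,\hat c(\sup_{0\le s\le t}D^d_s)\}$, i.e. $I^*_t=[\hat c(\sup_{0\le s\le t}D^d_s)-c]^+$, which is c\`adl\`ag, adapted and non-decreasing, hence admissible. Optimality and uniqueness then follow from a verification argument: applying It\^o's formula to $e^{-\rho t}v(C^{c,I}_t,D^d_t)$ for arbitrary $I\in\mathcal I$ and using $\rho v-\mathcal L^D v\ge g$ and $v_c\ge-q_0$ gives $G(c,d;I)\ge v(c,d)$, with equality for $I^*$ since that process stays in $\overline{\mathcal C}$ (where the PDE is an equality) and $\ud I^*$ charges only $\{v_c=-q_0\}$.

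The step I expect to be the main obstacle is the rigorous justification of the smooth-fit identity $\partial_d v_c=0$ along $\partial\mathcal C$: the facts quoted above provide $C^1$-regularity of $v$ only in the $c$-direction, whereas here one needs sufficient regularity of $v$ in $d$ up to the free boundary. I would obtain it either from the finer regularity results of the companion paper \citet{Federico14} or via the probabilistic representation of $v_c$. A secondary point is to check a posteriori that the $\hat c$ furnished by \eqref{optbound} is non-decreasing — needed both for the reflection picture and for the identity $\sup_s\hat c(D_s)=\hat c(\sup_s D_s)$ — and that the $v$ reconstructed from $v_c$ has enough ($C^2$-type) regularity to carry out the verification.
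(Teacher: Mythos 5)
Your proposal is correct in outline and lands exactly on the paper's intermediate formula \eqref{hatcmr}, but it takes a genuinely different route from the paper. The paper's own ``proof'' contains essentially no free-boundary analysis: it invokes Theorem 4.2 and Corollary 5.2 of \citet{Federico14} to assert $\hat{c}(d)=\rho\big[\beta(d)-\frac{\psi(d)}{\psi'(d)}\beta'(d)-q_0e^{\rho h}\big]$ (with the factor $e^{\rho h}$ on $q_0$ accounted for by the constant $e^{-\rho h}$ in $g$), and then its only actual work is the algebraic conversion of that expression into \eqref{optbound} using $\mathcal{L}\psi=0$ and $\mathcal{L}\beta=\beta_0$ --- a step you reproduce verbatim at the end of your argument. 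What you do instead is re-derive the cited formula from scratch via the HJB variational inequality: differentiating the PDE in $c$ to get a linear ODE for $v_c$ with forcing $g_c=e^{-\rho h}(c-\beta_0(d))$, exhibiting the particular solution $e^{-\rho h}(c/\rho-\beta(d))$, discarding the decreasing fundamental solution by a growth/boundary argument, and closing with value matching and smooth fit; your algebra there is right and the $e^{\rho h}$ on $q_0$ comes out automatically. What your route buys is a self-contained and transparent derivation that explains \emph{why} the boundary has this form; what the paper's route buys is rigor at no cost, since all the delicate points you honestly flag --- the justification of $\partial_d v_c=0$ across the free boundary (the quoted facts give only continuity and monotonicity of $v_c$, not of $v_{cd}$), the elimination of the second fundamental solution, the a posteriori monotonicity of $\hat c$ needed for $\sup_s\hat c(D_s)=\hat c(\sup_s D_s)$, the regularity required to apply It\^o's formula in the verification, and the uniqueness of the optimal control --- are precisely the content of the companion paper being cited. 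As a replacement for the paper's proof your sketch is therefore incomplete at those points; as a reconstruction of the underlying argument it is sound and instructive.
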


\begin{proof}
Theorem 4.2 and Corollary 5.2 of \citet{Federico14} state the above claim%
\footnote{Note however that here we have the term $e^{\rho h}$ multiplying $q_0$.
This is due to the fact that our function $g$ is equal to the function $g$
in Section 5 of \citet{Federico14} up to the constant $e^{-\rho h}$.}
with
\begin{align}        \label{hatcmr}
\hat{c}(d)=\rho \left[\beta(d)-\frac{\psi(d)}{\psi'(d)}\beta'(d)-q_0e^{\rho h}\right],
\end{align}
Therefore, if \eqref{hatcmr} can be rewritten in the form \eqref{optbound},
then it is more suitable for interpretation.

To this purpose, because $\psi$ solves the  ODE \eqref{ODE}, we have
\begin{align}         \label{e1}
\hat{c}(d) = \rho \beta(d)-\mu(d)\beta'(d)-\frac{1}{2}\sigma^2(d)\frac{\psi''(d)}{\psi'(d)}\beta'(d)- q_0 \rho e^{\rho h}.
\end{align}
On the other hand, it is well-known from the connection between the linear ODE
and the one-dimensional diffusions that the function $\beta$
solves the nonhomogeneous ODE \eqref{ODE} with the forcing term $\beta_0$:
\begin{align}          \label{e2}
\mathcal{L}\beta=\beta_0.
\end{align}
Hence, combining \eqref{e1} and \eqref{e2}, the expression \eqref{optbound} follows.
\end{proof}

The socially optimal investment as calculated above is also,
given the price process,
a profit-maximizing investment for price-taking investors.
Therefore, the optimum can be decentralized as a competitive equilibrium:
\begin{proposition} \label{prop:competitive}
Let $p^{c,d,*}$ be the price process at the optimum.
We have
\begin{align}%test 
\mathbb{E}
\left[
\int_0^{+\infty} e^{-\rho t}\left( p^{c,d,*}_t  -\eta \right) \ud t
\right] \leq  q_0,
\end{align}%test 
with the equality holding if and only if $(c,d)\in\mathcal{A}$.
\end{proposition}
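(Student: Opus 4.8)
The statement is the decentralisation claim: it asserts that the planner's first-order condition for the singular control --- namely $v_c\ge -q_0$ everywhere, with equality on $\mathcal A$ (the fourth property quoted before Theorem~\ref{prop:structure}) --- is exactly the marginal zero-profit condition faced by a price-taking investor confronted with the equilibrium price process $p^{c,d,*}_t=\eta+\theta\big(D^d_t-K^{*}_t\big)$ generated by the socially optimal capacity path $K^{*}$, where $K^{*}_{t+h}=C^{c,I^*}_t$ and $I^{*}$ is the optimal control of \eqref{committedfun}. So the plan is: (i) obtain a probabilistic representation of $v_c$; (ii) rewrite it as a positive multiple of the discounted stream of $\big(p^{c,d,*}_t-\eta\big)$; (iii) read off the inequality and its equality case from the known properties of $v_c$.

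For step (i) I would use the envelope identity
\[
v_c(c,d)=\mathbb E\!\left[\int_0^{+\infty}e^{-\rho t}\,g_c\big(C^{c,I^*}_t,D^d_t\big)\,\ud t\right].
\]
Heuristically this holds because $c$ enters $G(c,d;I)$ only through $C^{c,I}_t=c+I_t$, so $\partial_c G(c,d;I)=\mathbb E[\int_0^{+\infty}e^{-\rho t}g_c(C^{c,I}_t,D^d_t)\,\ud t]$ with the control frozen, and the envelope theorem removes the first-order dependence of $I^{*}$ on $c$. Making this rigorous is the delicate point: one must justify differentiating under the expectation (uniform integrability, via the quadratic growth \eqref{growth2} together with the standing assumption \eqref{rhoK1}), handle the jumps of $I^{*}$, and check admissibility of the perturbed controls. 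This is precisely the circle of regularity results established in \citet{Federico14} (the properties of $v$ listed before Theorem~\ref{prop:structure} are of this nature), so I expect to quote the representation of $v_c$ from there or to reprove it by their method. \emph{This is the main obstacle; everything after it is Markov-property bookkeeping.}

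For step (ii), differentiate \eqref{ggg} to get $g_c(c,d)=e^{-\rho h}\big(c-\beta_0(d)\big)$ and evaluate along the optimal trajectory. Using $C^{c,I^*}_t=K^{*}_{t+h}$, the time-homogeneity and Markov property of $D$ in the form $\beta_0(D^d_t)=\mathbb E\big[D^d_{t+h}\,\big|\,\mathcal F_t\big]$, and $p^{c,d,*}_{t+h}-\eta=\theta\big(D^d_{t+h}-K^{*}_{t+h}\big)$, one gets
\[
\mathbb E\big[g_c\big(C^{c,I^*}_t,D^d_t\big)\big]=e^{-\rho h}\,\mathbb E\big[K^{*}_{t+h}-D^d_{t+h}\big]=-\frac{e^{-\rho h}}{\theta}\,\mathbb E\big[p^{c,d,*}_{t+h}-\eta\big].
\]
Substituting into the representation and changing variables $s=t+h$ yields
\[
v_c(c,d)=-\frac1\theta\,\mathbb E\!\left[\int_h^{+\infty}e^{-\rho s}\big(p^{c,d,*}_s-\eta\big)\,\ud s\right],
\]
which is the displayed identity once $\theta$ and the capitalising factor $e^{\rho h}$ are absorbed into the unit cost (the same rescaling that produces $q_0 e^{\rho h}$ in Theorem~\ref{prop:structure}) and the price clock is dated from the instant the marginal committed unit becomes operational.

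For step (iii), the inequality $v_c(c,d)\ge -q_0$ on all of $\mathcal S$ gives, in the above normalisation, $\mathbb E[\int_0^{+\infty}e^{-\rho t}(p^{c,d,*}_t-\eta)\,\ud t]=-\theta\,v_c(c,d)\le q_0$, and $v_c(c,d)=-q_0$ holds precisely on $\mathcal A=\{c\le\hat c(d)\}$, which is the asserted equality case. As a fallback for step (i), one can instead differentiate the HJB variational inequality in $c$ and represent $v_c$ on $\mathcal C$ by Feynman--Kac with the Dirichlet datum $v_c=-q_0$ on the free boundary $\{c=\hat c(d)\}$; this needs extra care about the regularity of $v_c$ up to the boundary but relies on the same estimates.
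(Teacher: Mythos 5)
Your proposal follows essentially the same route as the paper: obtain the representation $v_c(c,d)=\mathbb{E}\left[\int_0^{+\infty}e^{-\rho t}g_c(C^{c,*}_t,D^d_t)\,\ud t\right]$, rewrite $g_c$ along the optimal trajectory as (a negative multiple of) the discounted stream of $p^{c,d,*}_t-\eta$ via the Markov property of $D$, and conclude from $v_c\geq -q_0$ together with the definition of $\mathcal{A}$. The one remark worth making is that the step you flag as the main obstacle --- the envelope identity --- is settled in the paper by an elementary two-sided suboptimality argument (the difference quotient of $G(\cdot,d;I^*)$ dominates that of $v(\cdot,d)$ for $c+\varepsilon$ and is dominated by it for $c-\varepsilon$, and since $g$ is exactly quadratic in $c$ the quotient inside the expectation is affine in $\varepsilon$), so given the differentiability of $v$ in $c$ from \citet{Federico14} no delicate differentiation under the integral or Feynman--Kac fallback is required.
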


More precisely, investment is null if the expected present revenue
from the additional unit is strictly lower than its cost,
whereas all profitable opportunities are exhausted for the case of equality.
The proof is in Appendix \ref{sec:competitive}.

%%%%%%%%%%%%%%%%%%%%%%%%%%%%%%%%%%%%%%%%%%%%%%%%%%%%%%%%%%%%%%%%%%%%%%%%%%%%%%%%%%%%%%%%%
\subsection{Interpretation of the boundary}        \label{sub:explicit}

The boundary $\hat{c}(d)$ defined by \eqref{optbound} and the optimal control
defined by \eqref{eq:boundary} are easily amenable to interpretations.
The boundary is composed of three terms:
\begin{align}
\hat{c}(d) &= \beta_0(d) - b_{\rho} - b_{\sigma}(d).
\end{align}
\begin{enumerate}
\item $\beta_0(d)$ is what $d$ is expected to be $h$ years later:
one commits to what demand is expected to be when the investment becomes operative.

\item The discounting bias $b_{\rho}= q_0 \rho e^{\rho h}$ 
expresses the fact that the investment is paid right away, 
whereas the cost of the insufficient capacity is discounted.

This effect can be retrieved with a heuristic non-stochastic version of the model.
Denote by $\Delta$, the permanent downward shift in capacity, compared to the best estimate
$\beta_{0}(d)$.
The investor permanently suffers the loss 
$\frac{1}{2}  \Delta^2$ 
per year in which the total actuarial cost is
$\frac{1}{2} \frac{\Delta^2}{\rho}$.
The total money saved by shifting capacity is $q_0 \Delta$.
The investor minimizes
\begin{align}%test 
\frac{1}{2}  e^{-\rho h} \frac{\Delta^2}{\rho} - q_0 \Delta
\end{align}%test 
with respect to $\Delta$.
The minimizing $\Delta$ is $q_0 \rho e^{\rho h}$.

\item The precautionary bias 
\begin{align}  \label{chatsigma}
b_{\sigma}(d)  
:=\frac{1}{2}\sigma^2(d)\, \left[\beta'(d) \frac{\psi''(d)}{\psi'(d)} -\beta''(d)\right].
\end{align}
gives the security margin due to the stochastic nature of the demand process. 
It is null if, for example, $\sigma(d)=0$.

The calculations go one step further if we assume the affine drift $\mu(d)=ad+b$.
Then we have
\begin{align}%test 
\beta_0(d)= de^{ah} - bh \frac{1-e^{ah}}{ah}.
\end{align}%test 
The ratio must be taken as $-1$ when $a=0$.
Therefore, $\beta''=0$ in this case, and
\begin{align}%test 
b_{\sigma}(d) =
\frac{1}{2} \sigma^2(d) \frac{e^{ah}}{\rho -a} \frac{\psi''(d)}{\psi'(d)}.
\end{align}%test 
\end{enumerate}

For the latter term $b_{\sigma}(d)$:
\begin{itemize}
\item The delay has an impact only if $a\neq 0$.
The sign of $a$ determines the impact of the delay:
the uncertainty about the future grows (diminishes) when $h$ increases if $a>0$ ($a<0$),
which justifies a bigger (smaller) bias.

\item The factor $\sigma^2(d)$ is local, it takes into account the local risk only.

\item The factor $\frac{\psi''(d)}{\psi'(d)}>0$ takes into account the global risk.%
\footnote{\citet[Prop. (50.3), Ch. V (p.292)]{Rogers2000} show that $\psi$ strictly increases and is convex.}
This term is a kind of absolute risk aversion related to the dynamics of $D$, not the delay.
\end{itemize}

%%%%%%%%%%%%%%%%%%%%%%%%%%%%%%%%%%%%%%%%%%%%%%%%%%%%%%%%%%%%%%%%%%%%%%%%%%%%%%%%%%%%%%%%%
%%%%%%%%%%%%%%%%%%%%%%%%%%%%%%%%%%%%%%%%%%%%%%%%%%%%%%%%%%%%%%%%%%%%%%%%%%%%%%%%%%%%%%%%%
\section{Geometric Brownian Motion}  \label{sec:GBM}

%%%%%%%%%%%%%%%%%%%%%%%%%%%%%%%%%%%%%%%%%%%%%%%%%%%%%%%%%%%%%%%%%%%%%%%%%%%%%%%%%%%%%%%%%
\subsection{The boundary}

In the case where the demand follows a geometric Brownian motion (GBM):
\begin{align}%test 
\ud D_t=\mu D_t{{\ud}t}+\sigma D_t\ud W_t, \quad \mu\in\mathbb{R},\ \sigma>0,
\end{align}%test 
with initial datum $d>0$,
the minimal constant $\kappa_1$ for which \eqref{growth2} is verified is $2\mu+\sigma^2$.
Therefore, according to \eqref{rhoK1}, we assume that
\begin{align}        \label{ass:rho}
 \rho>2\mu+\sigma^2.
\end{align}
In this case $\mathcal{O}=(0,+\infty)$ and
\begin{align}%test 
\beta_0(d)= e^{\mu h}d
\quad \text{ and } \quad
\beta(d)= \frac{e^{\mu h}}{\rho-\mu} d.
\end{align}%test 
Moreover,
\begin{align}%test        % \label{A}
 [\mathcal{L}\phi](d)=\rho \phi(d)-\mu d\phi'(d)-\frac{1}{2}\sigma^2 d^2\phi''(d),
     \quad \phi\in C^2(\mathcal{O};\mathbb{R}),
\end{align}%test 
and the fundamental increasing solution to $\mathcal{L}\phi=0$ is
\begin{align}%test 
\psi(d)= d^{m},
\end{align}%test 
where $m$ is the positive root of the equation
\begin{align}%test         %\label{eqcar}
 \rho-\mu m-\frac{1}{2}\sigma^2 m(m-1)=0.
\end{align}%test 
Due to Theorem \ref{prop:structure}, we have
\begin{align}%test 
\hat{c}(d)&=  d e^{\mu h}
              - q_0 \rho e^{\rho h}
              - \frac{1}{2} \sigma^2 \frac{e^{\mu h}}{\rho-\mu}(m-1) d,
\end{align}%test 
with
\begin{align}%test 
m = \frac{1}{ \sigma ^2}
     \left( \textstyle \sqrt{\left(\mu - \frac{1}{2} \sigma ^2 \right)^2+ 2 \rho  \sigma ^2}
          - \left( \mu - \frac{1}{2} \sigma ^2  \right) 
     \right).
\end{align}%test 
Further, \eqref{ass:rho} implies $m>2$.

%%%%%%%%%%%%%%%%%%%%%%%%%%%%%%%%%%%%%%%%%%%%%%%%%%%%%%%%%%%%%%%%%%%%%%%%%%%%%%%%%%%%%%%%%
\subsection{Comparative statics}

Note that
\begin{align}%test 
\hat{c}(d)= A d -  q_0 \rho e^{\rho h},
\text{ with }
A = \frac{1}{2}
\frac{e^{\mu h}}{\rho-\mu}
\left( 2 \rho - \mu + \frac{1}{2} \sigma^2
       - \sqrt{\left(\mu-\frac{1}{2}\sigma^2 \right)^2+ 2\rho \sigma^2}
\right).
\end{align}%test 

The next result analyzes the sensitivity of the boundary, and thus of the action region
with respect to the parameters of the model.
\begin{proposition}
The boundary in the GBM case has the following properties:
\begin{enumerate}
\item $\frac{\partial \hat{c}(d)}{\partial q_0} <0$
\item $A >0$
\item $\frac{h}{A} \frac{\partial A}{\partial h} = \mu h $, and it has the sign of $\mu$
\item $\frac{\sigma}{A} \frac{\partial A}{\partial \sigma} =
        - \frac{\sigma ^2}
               {\sqrt{\left( \mu-\frac{1}{2}\sigma^2\right)^2 + 2 \rho \sigma^2}}<0$
\item $\frac{\mu}{A} \frac{\partial A}{\partial \mu} =
        \mu h +
        \frac{1}{2}
        \frac{\mu}{\rho-\mu}
           \left(1- \frac{\mu+\frac{1}{2}\sigma^2 }{\sqrt{\left(\mu-\frac{1}{2}\sigma^2 \right)^2+ 2\rho \sigma^2}}
           \right) 
      $, and it has the sign of $\mu$
\item $\frac{\rho}{A} \frac{\partial A}{\partial \rho} =
       \frac{1}{2}
       \frac{\rho \sigma^2+  \mu^2 - \frac{1}{2} \mu \sigma^2 - \mu \sqrt{\left(\mu-\frac{1}{2}\sigma^2 \right)^2+ 2\rho \sigma^2}}
       { (\rho-\mu) \sqrt{\left(\mu-\frac{1}{2}\sigma^2 \right)^2+ 2\rho \sigma^2}} >0
      $
\end{enumerate}
\end{proposition}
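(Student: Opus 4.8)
The statement concerns six partial-derivative identities for the constant
\[
A = \frac{1}{2}\,\frac{e^{\mu h}}{\rho-\mu}\left(2\rho - \mu + \tfrac12\sigma^2 - R\right),
\qquad R := \sqrt{\left(\mu-\tfrac12\sigma^2\right)^2 + 2\rho\sigma^2},
\]
together with the sign claim for $\partial\hat c/\partial q_0$. The plan is to treat each item by direct differentiation, organizing the work so that the auxiliary quantity $R$ and the factor $2\rho-\mu+\tfrac12\sigma^2-R$ carry the computation. First I would record the elementary partials of $R$: $\partial R/\partial\sigma = \sigma(2\rho-\mu+\tfrac12\sigma^2)/R$, $\partial R/\partial\mu = (\mu-\tfrac12\sigma^2)/R$, and $\partial R/\partial\rho = \sigma^2/R$. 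I would also rewrite the bracket $2\rho-\mu+\tfrac12\sigma^2-R$ using the identity $R^2 = (\mu-\tfrac12\sigma^2)^2+2\rho\sigma^2$, which lets one rationalize awkward differences such as $R - (\mu+\tfrac12\sigma^2)$ or $2\rho-\mu+\tfrac12\sigma^2-R$ by multiplying by the conjugate. The fact that $m>2$ and $\rho>2\mu+\sigma^2$ (hence $\rho-\mu>0$ and, one checks, $2\rho-\mu+\tfrac12\sigma^2-R>0$) will be invoked to pin down signs at the end of each item.

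The items then fall out as follows. Item 1, $\partial\hat c/\partial q_0 = -\rho e^{\rho h}<0$, is immediate from $\hat c(d)=Ad-q_0\rho e^{\rho h}$. Item 2, $A>0$, follows from $\rho-\mu>0$ together with positivity of the bracket, which I would establish by writing $2\rho-\mu+\tfrac12\sigma^2-R = \bigl((2\rho-\mu+\tfrac12\sigma^2)^2 - R^2\bigr)/(2\rho-\mu+\tfrac12\sigma^2+R)$ and simplifying the numerator to a manifestly positive expression using $\rho>2\mu+\sigma^2$; alternatively one notes $A = \tfrac12 \sigma^2 \tfrac{e^{\mu h}}{\rho-\mu}(m-1)$ from the displayed formula for $\hat c$ and uses $m>2$. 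Item 3 is the cleanest: only the factor $e^{\mu h}$ depends on $h$, so $\partial A/\partial h = \mu A$, giving $\tfrac{h}{A}\partial_h A = \mu h$. Items 4, 5, 6 are the substantive computations: differentiate the product $e^{\mu h}\cdot(\rho-\mu)^{-1}\cdot(2\rho-\mu+\tfrac12\sigma^2-R)$ by the product rule, substitute the partials of $R$ above, and then collect terms over the common denominator $(\rho-\mu)R$. For item 4 only the last two factors depend on $\sigma$, and the $(\rho-\mu)^{-1}$ factor is constant in $\sigma$, so the computation is shorter; the claimed form $-\sigma^2/R$ (after multiplying by $\sigma/A$) should emerge after the conjugate trick collapses $\partial_\sigma(2\rho-\mu+\tfrac12\sigma^2-R) = \sigma - \sigma(2\rho-\mu+\tfrac12\sigma^2)/R = -\sigma(2\rho-\mu-\tfrac12\sigma^2 - \text{something})/R$ against the bracket itself.

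**Expected main obstacle.** The genuine work is in items 5 and 6, where all three factors (or two of them with a nontrivial denominator) depend on the parameter, so the product rule produces three terms that must be placed over the common denominator $(\rho-\mu)\,R$ and then simplified until the numerator matches the stated expression. The risk is purely algebraic bookkeeping: sign errors in $\partial_\mu(\rho-\mu)^{-1} = (\rho-\mu)^{-2}$, and the repeated use of $R^2 = (\mu-\tfrac12\sigma^2)^2+2\rho\sigma^2$ to turn $R^2$-terms in the numerator into polynomial terms so that cancellation is visible. I would carry item 6 (the $\rho$-derivative) by first differentiating $\log A = \mu h - \log(\rho-\mu) + \log(2\rho-\mu+\tfrac12\sigma^2-R)$, which gives $\tfrac{1}{A}\partial_\rho A = \tfrac{1}{\rho-\mu} + \tfrac{2 - \sigma^2/R}{2\rho-\mu+\tfrac12\sigma^2-R}$; then clear denominators, use the conjugate identity on $2\rho-\mu+\tfrac12\sigma^2-R$ to replace it by $\bigl((2\rho-\mu+\tfrac12\sigma^2)^2-R^2\bigr)/(2\rho-\mu+\tfrac12\sigma^2+R)$, and grind until the quoted ratio appears; positivity at the end follows because every term in the simplified numerator is nonnegative under $\rho>2\mu+\sigma^2$. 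The same logarithmic-derivative device streamlines items 4 and 5 as well, so I would adopt it uniformly: it replaces products by sums and isolates the parameter dependence factor by factor, which is the safest route through the computation.
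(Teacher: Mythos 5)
Your approach is essentially the paper's: the paper also treats items 1, 3, 4 as immediate and handles the rest by direct differentiation, rearranging the numerators over the common denominator $R:=\sqrt{(\mu-\tfrac12\sigma^2)^2+2\rho\sigma^2}$ until each sign reduces to that of $\rho(\rho-\mu)>0$ --- which is exactly your conjugate-rationalization device, since $(2\rho-\mu+\tfrac12\sigma^2)^2-R^2=4\rho(\rho-\mu)$. Two slips to correct, neither of which derails the method. First, your ``alternative'' identity $A=\tfrac12\sigma^2\tfrac{e^{\mu h}}{\rho-\mu}(m-1)$ is wrong: that quantity is the precautionary-bias coefficient $b_\sigma(d)/d$, whereas in fact $A=e^{\mu h}-\tfrac12\sigma^2\tfrac{e^{\mu h}}{\rho-\mu}(m-1)$; so item 2 must rest on your primary conjugate argument (which does work). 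Second, in item 6 the derivative of $-\log(\rho-\mu)$ with respect to $\rho$ is $-\tfrac{1}{\rho-\mu}$, not $+\tfrac{1}{\rho-\mu}$; with the corrected sign the logarithmic-derivative computation does land on the stated ratio. Finally, your closing remark that ``every term in the simplified numerator is nonnegative'' fails for item 6 when $\mu>0$ (the term $-\mu R$ is negative); one more application of the conjugate/squaring step is needed there, giving $(\rho\sigma^2+\mu^2-\tfrac12\mu\sigma^2)^2-\mu^2R^2=\rho\sigma^4(\rho-\mu)>0$, again the paper's $\rho(\rho-\mu)$ criterion.
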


\begin{proof}
Properties 1, 3, and 4 are immediate.

The other properties involve the same square root for the denominator. 
The signs are determined in all of the cases by showing that the numerators
can be rearranged and simplified to show that their signs depend only
on the sign of $\rho (\rho-\mu)$, which is positive given \eqref{ass:rho}.
These determinations ensure that the terms have 
the same sign for all of the relevant parameters.
\end{proof}

Property 2 says that the investment is responsive to the current demand.
Property 3 shows the importance of $\mu$:
when, e.g., $\mu>0$, a longer delay means above all a higher future demand,
hence a higher investment.
Property 4 confirms that more uncertainty makes the investor more cautious.
A similar logic explains property 5.

If the focus is on the precautionary bias only, then 
\begin{align}%test 
b_{\sigma}(d) = \frac{1}{2}
\frac{e^{\mu h}}{\rho-\mu}
\left(\textstyle \sqrt{\left(\mu-\frac{1}{2}\sigma^2 \right)^2+ 2\rho \sigma^2}
       - \left( \mu + \frac{1}{2} \sigma^2 \right)\right) \, d >0.
\end{align}%test 
But,
\begin{align}%test 
\frac{\mu}{b_{\sigma}(d)} \frac{\partial b_{\sigma}(d)}{\partial \mu} =
        \mu
        \left( h 
        -\frac{1}{2} \frac{1}{\rho-\mu} \textstyle
          \frac{2\rho -\mu  +\frac{1}{2}\sigma^2 -\sqrt{\left(\mu-\frac{1}{2}\sigma^2 \right)^2+ 2\rho \sigma^2}}{\sqrt{\left(\mu-\frac{1}{2}\sigma^2 \right)^2+ 2\rho \sigma^2}} \right).
\end{align}%test  
In the second factor, the first term is positive and the second one is negative.
We take $\mu>0$ for the discussion.
The overall sign of the elasticity depends, for example, on $h$:
if $h$ is small, then the elasticity is negative 
(the precautionary bias decreases as $\mu$ increases);
if $h$ is big, then the elasticity is positive 
(the precautionary bias increases).

Property 6 shows that the discount rate has two clear antagonistic effects:
the discounting bias increases in absolute value with respect to $\rho$ because
the benefits of investment are discounted, and the precautionary bias
decreases in absolute value because the future costs are discounted.
Thus, 
\begin{align}%test 
\frac{\rho}{b_{\sigma}(d)} \frac{\partial b_{\sigma}(d)}{\partial \rho} =
       \frac{1}{2} \frac{\rho}{\rho-\mu} \textstyle
       \frac{\mu + \frac{1}{2}\sigma^2 - \sqrt{\left(\mu-\frac{1}{2}\sigma^2 \right)^2+ 2\rho \sigma^2}}
       {\sqrt{\left(\mu-\frac{1}{2}\sigma^2 \right)^2+ 2\rho \sigma^2}} <0.
\end{align}%test 

%%%%%%%%%%%
\subsection{Simulations}
If $\rho=0.08$, $\mu = 0.03$, and $\sigma = 0.1$, then
the elasticity of $A$ w.r.t.\ to $h$ is $3\%$
at $h=1$ and $24\%$ at $h=8$.
The elasticity of $A$ w.r.t.\ $\sigma$ is  $- 21\%$ whatever the $h$.
The elasticity of $b_{\sigma}(d)$ w.r.t.\ to $h$ is $3\%$
for $h=1$ and $24\%$ for $h=8$.
The elasticity of $b_{\sigma}(d)$ w.r.t.\  $\sigma$ is  $153\%$
whatever the $h$.

Figure \ref{fig:geometric-behavior-1} %(Right) 
shows a trajectory of demand for $h=8$
and $\sigma = 0.06$ with a starting point of $d=10^3$.
The committed capacity stops growing during the episode where demand is (fortuitously)
stabilized. Given the long delay, the committed capacity is almost always ahead
of the demand.

\begin{figure}[ht!]
\begin{center}
\includegraphics[width=0.35\paperwidth]{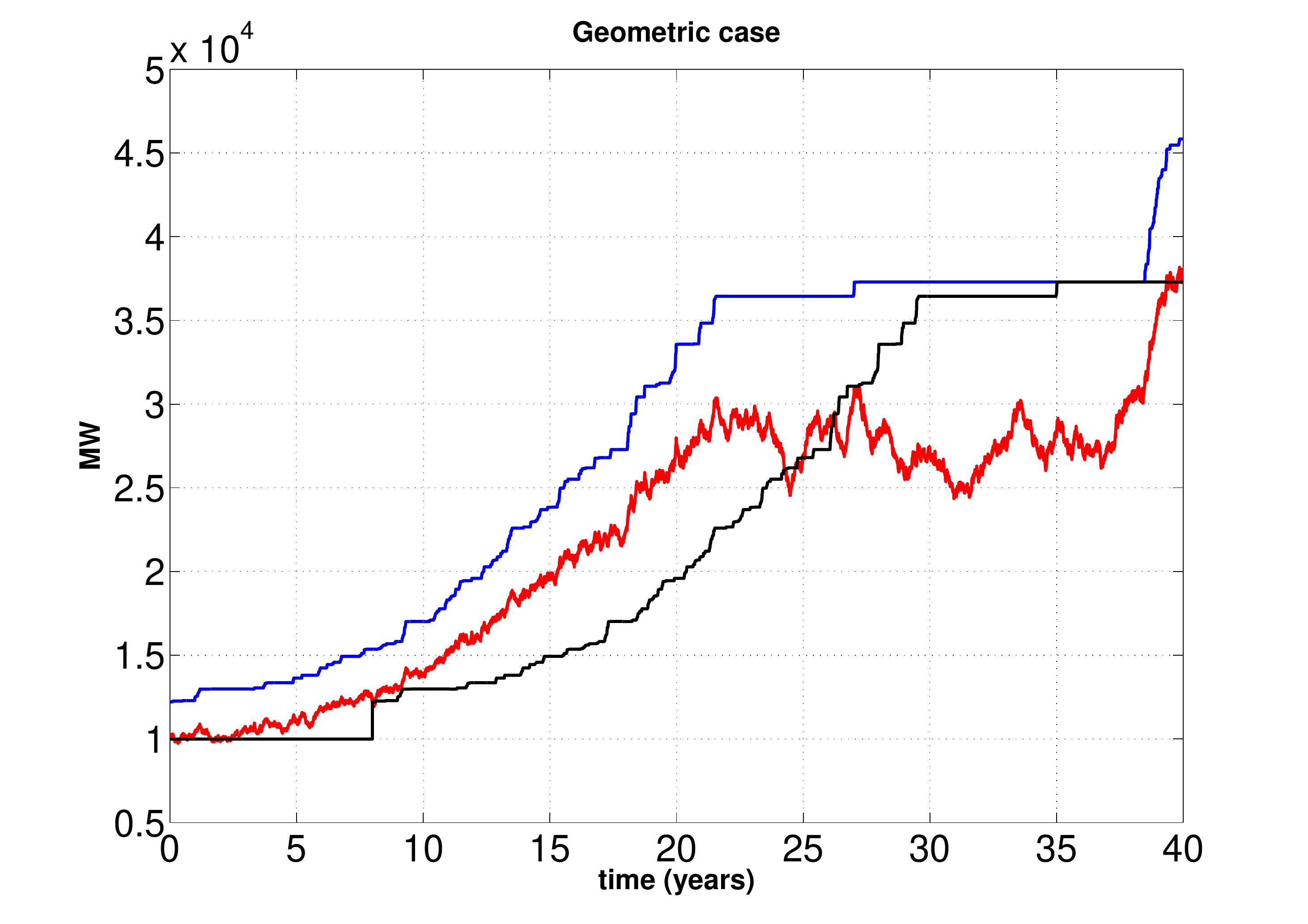}
\caption{%(Left) Investment boundaries.
         %(Right) 
         Demand, committed, and installed capacity behavior in the geometric case
                 for $h=8$ years and when $\sigma = 0.06$.
         }
    \label{fig:geometric-behavior-1}
\end{center}
\end{figure}

%%%%%%%%%%%%%%%%%%%%%%%%%%%%%%%%%%%%%%%%%%%%%%%%%%%%%%%%%%%%%%%%%%%%%%%%%%%%%%%%%%%%%%%%%
%%%%%%%%%%%%%%%%%%%%%%%%%%%%%%%%%%%%%%%%%%%%%%%%%%%%%%%%%%%%%%%%%%%%%%%%%%%%%%%%%%%%%%%%%
\section{CIR model} \label{sec:CIR}

%%%%%%%%%%%%%%%%%%%%%%%%%%%%%%%%%%%%%%%%%%%%%%%%%%%%%%%%%%%%%%%%%%%%%%%%%%%%%%%%%%%%%%%%%
\subsection{The boundary}

For the case where the demand follows a Cox-Ingersoll-Ross model:
\begin{align}%test 
\ud D_t=\gamma(\delta-D_t){{\ud}t}+\sigma \sqrt{D_t}\ud W_t, \qquad \gamma,\delta,\sigma>0,
\end{align}%test 
then, under the assumption  $2\gamma\delta\geq \sigma^2$,
we have $\mathcal{O}=(0,+\infty)$.
We suppose that this assumption is true.
Also in this case \eqref{growth2} is verified with $\kappa_1=\varepsilon$ for any $\varepsilon>0$.
Therefore, according to \eqref{rhoK1}, we assume that $\rho>0$.

This case has
\begin{align}%test          % \label{beta4}
 \beta_0(d)= e^{-\gamma h}d +   (1-e^{-\gamma h}) \delta
 \quad \text{ and } \quad
 \beta(d)  = e^{-\gamma h}\,\frac{d-\delta}{\rho+\gamma}+\frac{\delta}{\rho}.
\end{align}%test 
Moreover,
\begin{align}%test         % \label{A2}
 [\mathcal{L}\phi](d)=\rho \phi(d)-\gamma(\delta-d)\phi'(d)-\frac{1}{2}\sigma^2 d\phi''(d),
 \quad \phi\in C^2(\mathcal{O};\mathbb{R}),
\end{align}%test 
and the increasing fundamental solution to $\mathcal{L}\phi=0$ is
\begin{align}%test 
\psi(d)=M(\rho/\gamma,2\gamma\delta/\sigma^2,2\gamma d/\sigma^2),
\end{align}%test 
where $M$ is the confluent hypergeometric function of the first type.%
\footnote{See \citet{Abramowitz65}.}

Hence,
\begin{align}%test 
\hat{c}(d) =& e^{-\gamma h} d +  (1-e^{-\gamma h}) \delta
               - q_0 \rho e^{\rho h}
               -\frac{1}{2} \sigma^2\frac{e^{-\gamma h}}{\rho +\gamma}
                \frac{\psi''(d)}{\psi'(d)} \\
=& \delta + e^{- \gamma h} (d-\delta)
-   q_0 \rho e^{\rho h}
- e^{- \gamma h } \,
\frac{\sigma ^2}{2 \gamma  \delta +\sigma ^2}  \,
\frac{M\! \left(2+\frac{\rho }{\gamma},2+\frac{2 \gamma \delta }{\sigma^2},\frac{2 d \gamma}{\sigma^2}\right)}
      {M\! \left(1+\frac{\rho}{\gamma},1+\frac{2 \gamma \delta}{\sigma^2},\frac{2 d \gamma}{\sigma^2}\right)}
 \,  d.
\end{align}%test 

%%%%%%%%%%%%%%%%%%%%%%%%%%%%%%%%%%%%%%%%%%%%%%%%%%%%%%%%%%%%%%%%%%%%%%%%%%%%%%%%%%%%%%%%%
\subsection{Comparative statics}

The analysis is done with a stylized version of the boundary
based on the following results.
\begin{proposition} The boundary in the CIR model verifies:
\begin{enumerate}
\item Tangent at $d=0$:
\begin{align}%test 
\mathrm{Tangent}(d)=
\frac{ \gamma  \delta }{\gamma  \delta +\frac{\sigma ^2}{2}}  e^{-\gamma h} d
+\left( 1-e^{-h \gamma } \right) \delta
-q_0 \rho e^{h \rho }
\end{align}%test 
\item Asymptote when $d \rightarrow \infty$:
\begin{align}%test 
\mathrm{Asymptote}(d)=
\frac{\rho}{\rho+\gamma} e^{-\gamma h}d+
\left(1-\frac{\rho}{\rho+\gamma}e^{-\gamma h}\right) \delta
- \frac{\sigma^2}{2 \gamma} \frac{\rho}{\rho+\gamma}e^{-\gamma h}
- q_0 \rho e^{\rho h}
\end{align}%test 
\item The intersection between the two lines above is
\begin{align}%test 
\left( \delta + \frac{\sigma^2}{2 \gamma}, \delta - q_0 \rho e^{\rho h}  \right)
\end{align}%test 
\end{enumerate}
\end{proposition}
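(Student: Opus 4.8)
The plan is to establish the three items by direct computation from the explicit boundary
\begin{align}
\hat{c}(d) = \delta + e^{-\gamma h}(d-\delta) - q_0\rho e^{\rho h}
  - e^{-\gamma h}\,\frac{\sigma^2}{2\gamma\delta+\sigma^2}\,
    \frac{M\!\left(2+\frac{\rho}{\gamma},2+\frac{2\gamma\delta}{\sigma^2},\frac{2d\gamma}{\sigma^2}\right)}
         {M\!\left(1+\frac{\rho}{\gamma},1+\frac{2\gamma\delta}{\sigma^2},\frac{2d\gamma}{\sigma^2}\right)}\,d,
\end{align}
using only two standard facts about the confluent hypergeometric function $M(a,b,z)$: the value at the origin $M(a,b,0)=1$, and the derivative identity $\frac{d}{dz}M(a,b,z)=\frac{a}{b}M(a+1,b+1,z)$ (equivalently, $\psi'(d)$ and $\psi''(d)$ expressed via shifted parameters), both of which are in \citet{Abramowitz65}. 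Write $R(d)$ for the ratio $M(2+\rho/\gamma,\dots)/M(1+\rho/\gamma,\dots)$ and note $\hat c(d)=\delta+e^{-\gamma h}(d-\delta)-q_0\rho e^{\rho h}-e^{-\gamma h}\frac{\sigma^2}{2\gamma\delta+\sigma^2}R(d)\,d$.

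For item~1 (tangent at $d=0$): evaluate $\hat c(0)$ using $M(a,b,0)=1$, so $R(0)=1$ and the last term vanishes at $d=0$, giving $\hat c(0)=\delta(1-e^{-\gamma h})-q_0\rho e^{\rho h}$, which matches the constant term of the claimed tangent. Then compute $\hat c'(0)$: the derivative of the linear-in-$d$ part is $e^{-\gamma h}$, and the derivative of $-e^{-\gamma h}\frac{\sigma^2}{2\gamma\delta+\sigma^2}R(d)d$ at $d=0$ equals $-e^{-\gamma h}\frac{\sigma^2}{2\gamma\delta+\sigma^2}R(0)=-e^{-\gamma h}\frac{\sigma^2}{2\gamma\delta+\sigma^2}$ since the $R'(0)d$ contribution drops out at $d=0$. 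Hence $\hat c'(0)=e^{-\gamma h}\left(1-\frac{\sigma^2}{2\gamma\delta+\sigma^2}\right)=e^{-\gamma h}\frac{2\gamma\delta}{2\gamma\delta+\sigma^2}=e^{-\gamma h}\frac{\gamma\delta}{\gamma\delta+\sigma^2/2}$, which is exactly the slope in $\mathrm{Tangent}(d)$. So $\mathrm{Tangent}(d)=\hat c(0)+\hat c'(0)\,d$ as claimed.

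For item~2 (asymptote as $d\to\infty$): the key input is the large-$z$ behavior of the ratio $R(d)$. From the connection with the ODE \eqref{ODE}, one has $\frac12\sigma^2 d\,\psi''(d)=\rho\psi(d)-\gamma(\delta-d)\psi'(d)$, so $\frac{\psi''(d)}{\psi'(d)}=\frac{2}{\sigma^2 d}\bigl(\rho\frac{\psi(d)}{\psi'(d)}-\gamma(\delta-d)\bigr)$, and since $\psi(d)/\psi'(d)=o(d)$ as $d\to\infty$ (it behaves like $O(1)$ because $\psi$ grows roughly exponentially in $2\gamma d/\sigma^2$, so $\psi/\psi'\sim\sigma^2/(2\gamma)$), one gets $\frac12\sigma^2\frac{\psi''(d)}{\psi'(d)}\sim \frac{\gamma(d-\delta)}{d}\cdot\text{(correction)}+\rho\frac{\sigma^2/(2\gamma)}{d}$. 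Being careful, the relevant quantity is $e^{-\gamma h}\frac12\sigma^2\frac{\psi''(d)}{\psi'(d)}\cdot\frac{1}{?}$; rather than chase asymptotics of $M$ directly, I would instead rewrite the last term of the first line of $\hat c$, namely $-\frac12\sigma^2\frac{e^{-\gamma h}}{\rho+\gamma}\frac{\psi''(d)}{\psi'(d)}$, and expand $\psi/\psi'\to \sigma^2/(2\gamma)$ to extract the affine-in-$d$ part: this yields slope $\frac{e^{-\gamma h}}{\rho+\gamma}\cdot\gamma\cdot\frac{?}{?}$. Collecting, the affine approximation of $\hat c$ becomes $\frac{\rho}{\rho+\gamma}e^{-\gamma h}d+\bigl(1-\frac{\rho}{\rho+\gamma}e^{-\gamma h}\bigr)\delta-\frac{\sigma^2}{2\gamma}\frac{\rho}{\rho+\gamma}e^{-\gamma h}-q_0\rho e^{\rho h}$, matching $\mathrm{Asymptote}(d)$; the constant $-\frac{\sigma^2}{2\gamma}\frac{\rho}{\rho+\gamma}e^{-\gamma h}$ arises precisely from the $\psi/\psi'\to\sigma^2/(2\gamma)$ subleading term. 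Finally, item~3 is pure algebra: set $\mathrm{Tangent}(d)=\mathrm{Asymptote}(d)$, solve the resulting linear equation for $d$, obtaining $d=\delta+\frac{\sigma^2}{2\gamma}$, and substitute back to get the ordinate $\delta-q_0\rho e^{\rho h}$ (the $e^{-\gamma h}$ terms cancel at that point).

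The main obstacle is item~2: rigorously justifying the asymptotic expansion of $\psi''/\psi'$ (equivalently of the ratio of confluent hypergeometric functions) to the required order, i.e.\ showing $\frac{\psi(d)}{\psi'(d)}=\frac{\sigma^2}{2\gamma}+o(1)$ with enough control that the affine part of $\hat c(d)$ is correctly identified, including the $O(1)$ constant. One clean route is to avoid $M$-asymptotics entirely: use the ODE identity $\mathcal L\psi=0$ to write $\frac12\sigma^2\frac{\psi''}{\psi'}=\rho\frac{\psi}{\psi'}+\gamma(d-\delta)$ exactly, then substitute into the first-line formula for $\hat c(d)$, giving $\hat c(d)=\delta+e^{-\gamma h}(d-\delta)-q_0\rho e^{\rho h}-\frac{e^{-\gamma h}}{\rho+\gamma}\bigl(\rho\frac{\psi(d)}{\psi'(d)}+\gamma(d-\delta)\bigr)$, which after collecting the $d$ terms is exactly affine plus the single nonlinear term $-\frac{\rho e^{-\gamma h}}{\rho+\gamma}\frac{\psi(d)}{\psi'(d)}$; the asymptote is then read off by replacing $\psi/\psi'$ with its limit $\sigma^2/(2\gamma)$, and one only needs the single limit $\lim_{d\to\infty}\psi(d)/\psi'(d)=\sigma^2/(2\gamma)$, which follows from the logarithmic derivative asymptotics of $M(a,b,z)\sim \Gamma(b)\Gamma(a)^{-1}e^{z}z^{a-b}$ giving $\frac{d}{dz}\log M\to 1$ and hence $\frac{d}{dd}\log\psi\to 2\gamma/\sigma^2$.
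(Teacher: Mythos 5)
Your proposal is correct and follows essentially the same route as the paper: item~1 via the series expansion of $M$ at the origin, and item~2 by reducing the boundary to an affine function plus the single term $-\frac{\rho e^{-\gamma h}}{\rho+\gamma}\frac{\psi(d)}{\psi'(d)}$ (your ``clean route'' via $\mathcal{L}\psi=0$ just recovers the paper's starting point \eqref{hatcmr}) and then using the same two identities for $M$ to get $\psi/\psi'\to\sigma^2/(2\gamma)$. The only blemish is the dropped factor of $d$ in $\tfrac12\sigma^2 d\,\psi''/\psi'$ in your ODE identity (mirroring a typo in the paper's own first display for $\hat c(d)$), which cancels against the matching omission in the formula you substitute into, so the conclusion is unaffected.
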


\begin{proof}
The calculation of the tangent is immediately given by the series expansion of $M$:
\begin{align}%test 
\mathop{M\/}\nolimits\!\left(a,b,z\right)=\sum_{s=0}^{\infty}\frac{\left(a%
\right)_{s}}{\left(b\right)_{s}s!}z^{s}=1+\frac{a}{b}z+\frac{a(a+1)}{b(b+1)2!} z^2+ \cdots
\end{align}%test

To calculate the asymptote, we start from  \eqref{hatcmr}.
Let  $M(a,b;z)$ be the confluent hypergeometric function of the first type with parameters $a,b$.
Then
\begin{enumerate}
\item[$(i)$]  $zM'(a,b;z)=a(M(a+1,b;z)-M(a,b;z))$ (here $M'$ is the derivative w.r.t.\ $z$)
\item[$(ii)$] $M(a,b;z)\sim \frac{\Gamma(b)}{\Gamma(a)} e^z z^{a-b},$ when $z\rightarrow \infty$
\end{enumerate}
Using $(i)$,
\begin{align}%test 
\frac{M(a,b;z)}{zM'(a,b;z)}=\frac{M(a,b;z)}{zM'(a,b;z)}=
\frac{M(a,b;z)}{a(M(a+1,b;z)-M(a,b;z))}=
\frac{1}{a\left(\frac{M(a+1,b;z)}{M(a,b;z)}-1\right)},
\end{align}%test 
and using $(ii)$, we get
\begin{align}%test 
\lim_{z\rightarrow\infty} \frac{M(a,b,z)}{zM'(a,b;z)}=0.
\end{align}%test 
Thus, the slope of the asymptote of $\hat{c}$ is
\begin{align}%test 
\alpha:= \lim_{d\rightarrow\infty }\frac{\hat{c}(d)}{d}= \lim_{d\rightarrow\infty }\frac{\rho\beta(d)}{d}=\frac{\rho}{\rho+\gamma} e^{-\gamma h}.
\end{align}%test 
The calculation is:
\begin{align}%test 
\kappa:=\lim_{d\rightarrow \infty} \hat{c}(d)-\alpha d.
\end{align}%test 
Therefore,
\begin{align}%test 
\kappa =
\delta\left(1-\frac{\rho}{\rho+\gamma}e^{-\gamma h}\right)
-\kappa_1 \frac{\rho}{\rho+\gamma}e^{-\gamma h}
- q_0 \rho e^{\rho h},
\end{align}%test 
where
\begin{align}%test 
\kappa_1:= \lim_{d\rightarrow \infty}\frac {\psi(d)}{\psi'(d)}.\end{align}%test 
To compute the latter, $(i)$ is used to get
\begin{align}%test 
\frac{M(a,b;z)}{M'(a,b;z)}=
\frac{z}{a\left(\frac{M(a+1,b;z)}{M(a,b;z)}-1\right)}.
\end{align}%test 
Then, the use of $(ii)$ and $a \Gamma(a)=\Gamma(a+1)$ gets
\begin{align}%test 
\lim_{z\rightarrow\infty}
\frac{M(a,b;z)}{M'(a,b;z)}=\lim_{z\rightarrow\infty}\frac{z}{z-a} =1.
\end{align}%test 
Thus, given the function of interest $M(\rho/\gamma,2\gamma\delta/\sigma^2,2\gamma d/\sigma^2)$,
$\kappa_1= \frac{\sigma^2}{2 \gamma}$ is obtained.
The expression of the asymptote follows.

The expression of the intersection is a direct implication of points 1.\ and 2.\ of this
Proposition.
\end{proof}

For the economic interpretations, $\hat{c}(d)$ has the stylized expression:
\begin{align}%test 
\min  \left\{ \text{Tangent}(d), \text{Asymptote}(d) \right\}.
\end{align}%test 
The kink point
$\left( \delta + \frac{\sigma^2}{2 \gamma}, \delta - q_0 \rho e^{\rho h}  \right)$
is close to $( \delta , \delta )$ if the uncertainty is
small compared to the convergence speed.

When $h$ and $\sigma$ are small, the tangent is the 45 degree line
minus the discounting bias: committed capacity follows demand.
The asymptote is conservative because the capacity increases by 
only $\frac{\rho}{\rho + \gamma}$ for each unit increase of demand.

With a large convergence speed compared to the volatility (a small $\sigma^2/\gamma$),
the uncertainty has a negligible impact on the boundary.

The tangent and the asymptote become flatter and flatter as $h$ increases:
current conditions as measured by $d$ matter less when the delay is longer.
The flattening effect is exponential.
Reversion to the mean implies that as the delay increases,
the current demand progressively loses relevance for the prediction of the future demand.
No precautionary bias is needed at the limit for the large delays.

%%%%%%%%%%%%%%%%%%%%%%%%%%%%%%%%%%%%%%%%%%%%%%%%%%%%%%%%%%%%%%%%%%%%%%%%%%%%%%%%%%%%%%%%%
\subsection{Simulations}

The CIR model provides a rich setting to analyze the effects of the time-to-build, of the volatility, and of different convergence rates.

The following reference parameters are:
the initial value demand is 10,
the discount rate is $\rho = 0.08$,
the long-term demand is $\delta = 20$, and
the demand approaches this limit at a speed $\gamma=0.8$,

The alternative scenarios consider the four cases
where the delay $h=1$ or $8$, and the demand volatility $\sigma =0.1$ or $0.05$.
Figure \ref{fig:CIR-behavior-1} (Left) gives the four boundaries.
However, the two boundaries with $h=8$ are almost completely flat and confounded.
The other two have very close tangents and asymptotes and are hard to
discern visually.
The 45 degree line is also drawn.

Figure \ref{fig:CIR-behavior-1} (Right) shows a trajectory for $h=8$ and $\sigma = 0.2$.
The committed capacity is immediately at the maximum and then varies very little
except when the demand becomes exceptionally high for the first time.

\begin{figure}[ht!]
\begin{center}
\includegraphics[width=0.35\paperwidth]{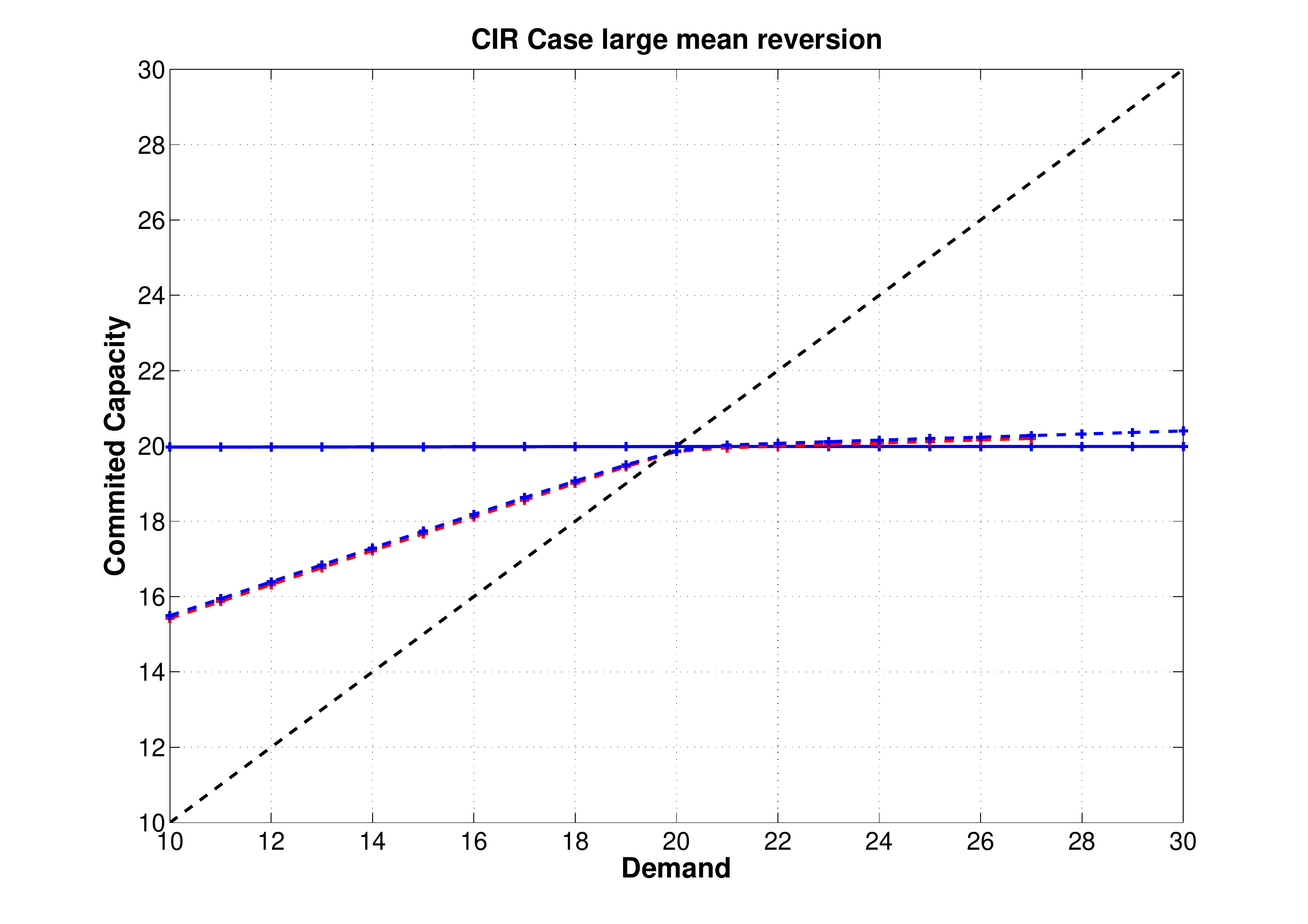}
\includegraphics[width=0.35\paperwidth]{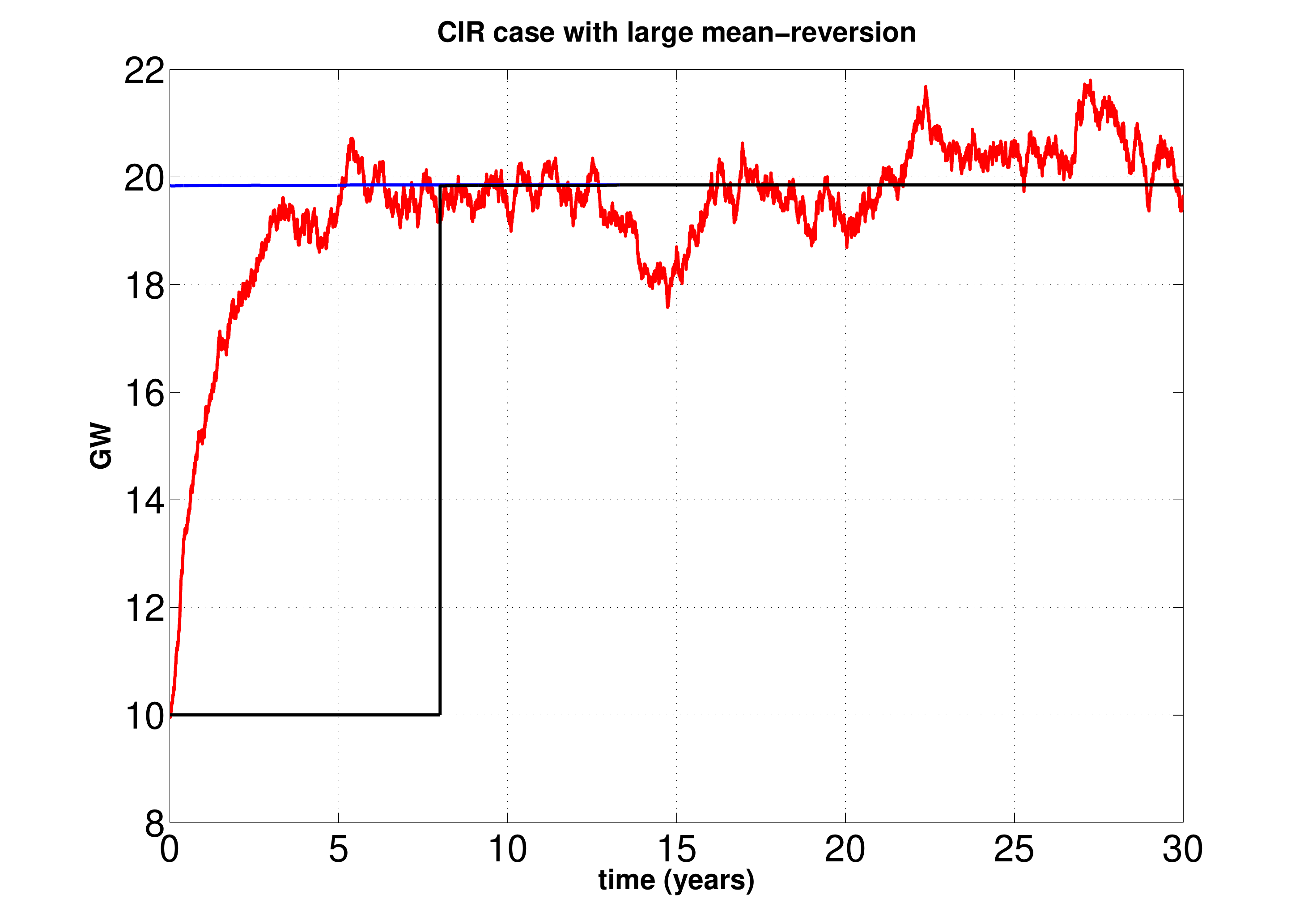}
\caption{(Left)  Investment boundaries.
         (Right) Demand, committed and installed capacity behavior with the CIR model
                 for an eight-year delay and a large mean-reversion ($\gamma=0.8$).
         }
    \label{fig:CIR-behavior-1}
\end{center}
\end{figure}

Figure \ref{fig:CIR-behavior-2} (Left) shows four boundaries with the same
parameters as in Figure \ref{fig:CIR-behavior-1} except that $\gamma=0.08$.
The boundaries have a less marked kink than with a faster convergence rate: boundaries
are more like the 45 degree line because the demand evolves much more slowly,
and they are much more alike in terms of positions and slopes.

Figure \ref{fig:CIR-behavior-2} (Right) shows a trajectory for $h=8$ and $\sigma = 0.2$.
The committed capacity is more responsive to the current conditions because
they are better predictors of the future demand than when $\gamma$ is large.
This effect plays for demand levels below 20 or above.

\begin{figure}[ht!]
\begin{center}
\includegraphics[width=0.35\paperwidth]{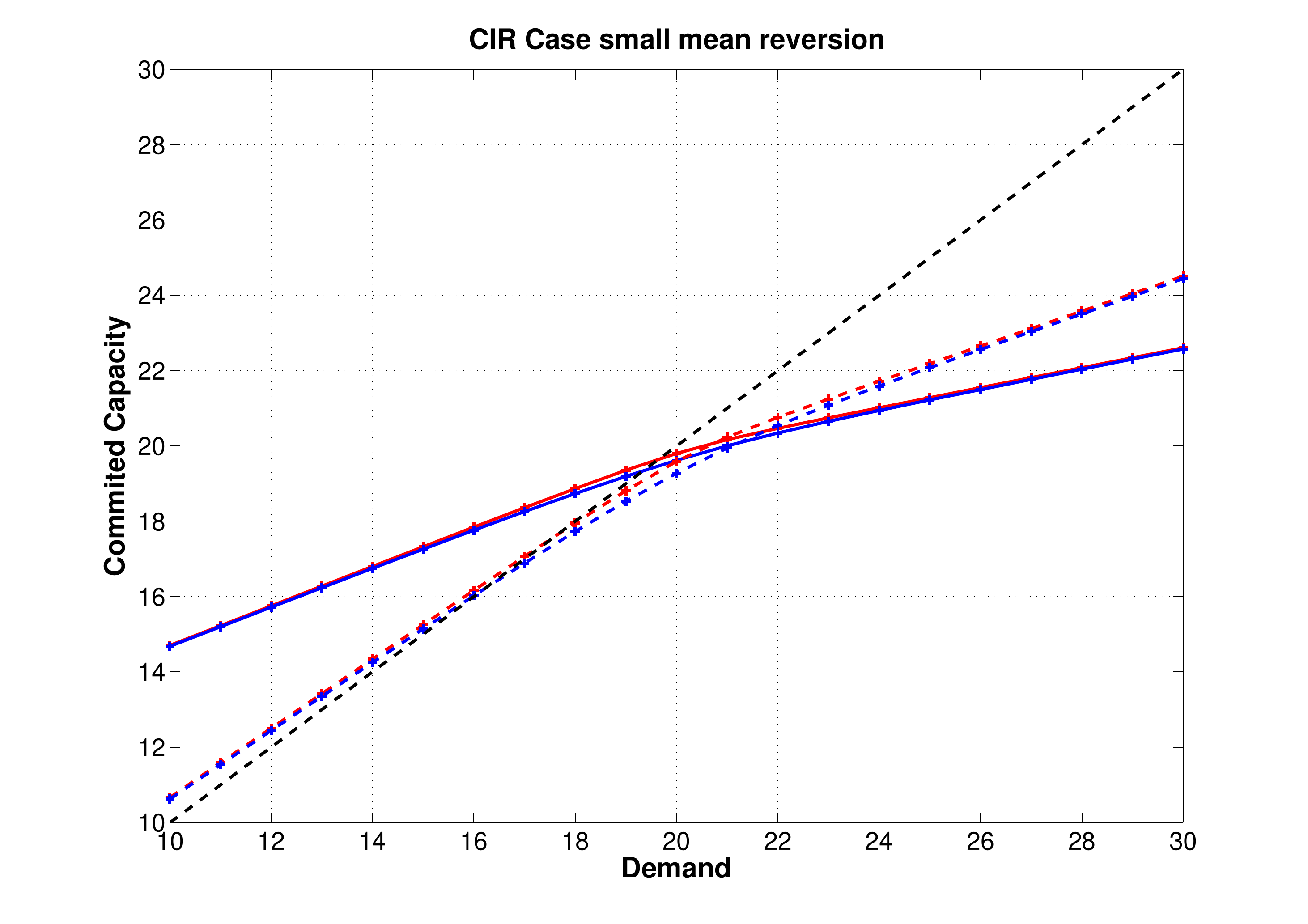}
\includegraphics[width=0.35\paperwidth]{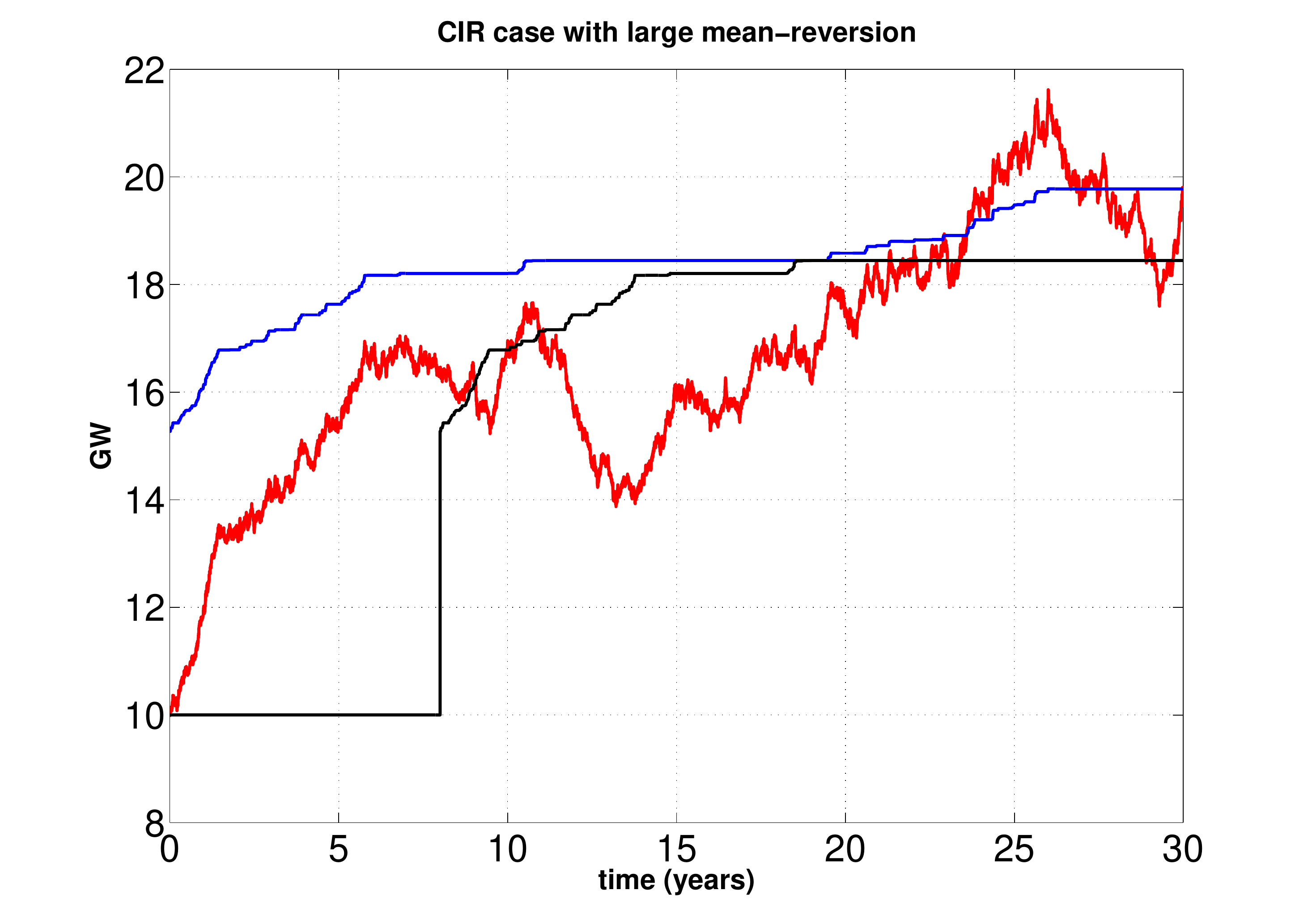}
\caption{(Left)  Investment boundaries.
         (Right) Demand, committed and installed capacity behavior in the CIR case
                 for an eight-year delay and a small mean-reversion ($\gamma=0.08$).  }
    \label{fig:CIR-behavior-2}
\end{center}
\end{figure}

%%%%%%%%%%%%%%%%%%%%%%%%%%%%%%%%%%%%%%%%%%%%%%%%%%%%%%%%%%%%%%%%%%%%%%%%%%%%%%%%%%%%%%%%%
%%%%%%%%%%%%%%%%%%%%%%%%%%%%%%%%%%%%%%%%%%%%%%%%%%%%%%%%%%%%%%%%%%%%%%%%%%%%%%%%%%%%%%%%%
\section{Conclusion} \label{sec:conclusion}

Electricity demand has a random part and is price sensitive.
Our minimization of an expected quadratic loss is founded on microeconomic theory,
and our optimal solution can be implemented as a competitive equilibrium.
In this paper where the delay between the investment decision and activation of the new capacity
is accounted for, we have characterized the explicit decision rules for important classes of demand processes.

The benefits of closed-form solutions cannot be overstated,
because we can show the interaction, in investors decisions, 
between the time-to-build and the uncertainty.
In particular, we identify the base rule and the two corrective terms:
the investor should invest if his or her committed capacity
(i.e., the capacity in the pipeline)
is below the best linear estimate of the future demand,
the given demand today, and the delay
minus a discounting bias 
and a precautionary bias determined by uncertainty and global risk aversion.
The latter term varies substantially with the demand model.

In the arithmetic Brownian motion, 
the delay and the uncertainty have additive separate effects.
In the geometric Brownian motion, the shocks are amplified exponentially
so that with a longer delay, restricting the future capacity becomes more costly. 
On the other hand, the discounting bias is accentuated by the delay. 
The question of which of these opposite effects dominates the other as the delay increases can be addressed with our explicit expressions.
In the CIR case, reversion to the mean implies that as the delay increases,
the current demand progressively loses relevance for the prediction of the future demand.
No precautionary bias is needed at the limit for the large delays.

%%%%%%%%%%%%%%%%%%%%%%%%%%%%%%%%%%%%%%%%%%%%%%%%%%%%%%%%%%%%%%%%%%%%%%%%%%%%%%%%%%%%%%%%%
%%%%%%%%%%%%%%%%%%%%%%%%%%%%%%%%%%%%%%%%%%%%%%%%%%%%%%%%%%%%%%%%%%%%%%%%%%%%%%%%%%%%%%%%%
\appendix

%%%%%%%%%%%%%%%%%%%%%%%%%%%%%%%%%%%%%%%%%%%%%%%%%%%%%%%%%%%%%%%%%%%%%%%%%%%%%%%%%%%%%%%%%
%%%%%%%%%%%%%%%%%%%%%%%%%%%%%%%%%%%%%%%%%%%%%%%%%%%%%%%%%%%%%%%%%%%%%%%%%%%%%%%%%%%%%%%%%
\section{Arithmetic Brownian Motion}  \label{sec:ABM}

\subsection{The Frontier}
With an arithmetic Brownian model of demand,
our model  is a particular case of \citet{Bar-Ilan02},
where the fixed investment cost is null.
The optimal strategy is simpler. The demand dynamics are:
\begin{align}%test 
\ud D_t=\mu{{\ud}t}+\sigma \ud W_t, \qquad \mu\in\mathbb{R},\, \sigma>0,
\end{align}%test 
then $\mathcal{O}=\mathbb{R}$ and \eqref{growth2}
is verified with $\kappa_1=\varepsilon$ for each $\varepsilon>0$.
Therefore, according to \eqref{rhoK1}, we assume that $\rho>0$.
Thus,
\begin{align}%test        % \label{A1-ABM}
 [\mathcal{L}\phi](d)=\rho \phi(d)-\mu \phi'(d)-\frac{1}{2}\sigma^2 \phi''(d),
 \quad \phi\in C^2(\mathcal{O}).
\end{align}%test 
The increasing fundamental solution to $\mathcal{L}\phi=0$ is
$\psi(d)=e^{\lambda d}$
where $\lambda$ is the positive solution to
\begin{align}%test       %  \label{eqcar2}
 \rho-\mu \lambda -\frac{1}{2} \sigma^2 \lambda^2=0.
\end{align}%test 
Because, in this case,
\begin{align}%test          % \label{betaarit}
\beta_0(d)= d + \mu h
\quad \text{ and } \quad
\beta(d) =  \frac{\mu h}{\rho}+\frac{d}{\rho}+\frac{\mu}{\rho^2}.
\end{align}%test 
Due to Theorem \ref{prop:structure}, $\hat{c}$ is affine:
\begin{align}        \label{hatc2}
  \hat{c}(d)&=
  d+\mu h
  - q_0 \rho e^{\rho h}
  -\frac{\sqrt{\mu ^2+2 \rho  \sigma ^2} -\mu }{2 \rho }.
\end{align}

\subsection{Comparative statics}

Consider that
\begin{align}%test 
\frac{\partial^2 \hat{c}(d)}{\partial h \partial \sigma} =0.
\end{align}%test 
Whatever the time to build $h$, the investment is retarded in the same way by an increase in $\sigma$,
and conversely.
This additive separability makes it difficult to find the cross effects between the uncertainty and the delay with this model, 
contrary to \citet{Bar-Ilan02}.

An increase in uncertainty always retards investment:
\begin{align}%test 
\partial \hat{c}(d) /\partial \sigma = - \frac{\sigma}{\sqrt{\mu^2 + 2 \rho \sigma^2}}
<0.
\end{align}%test 

The variation of $\hat{c}(d)$ with respect to the time-to-build  $h$ is
\begin{align}%test 
\partial \hat{c}(d) /\partial h = \mu - q_0 \rho^2 e^{\rho h }. % \approx \mu.
\end{align}%test 
The effect is to hasten investment if  $\mu$ is relatively large.
If $h$ is relatively large, then the cost of investment appears large compared
to the future discounted damage, and investment is retarded.
We retrieve the effects encountered in the case of the geometric Brownian motion.

Furthermore,
\begin{align}%test 
\partial \hat{c}(d) /\partial \mu = 
h+
\frac{1}{2 \rho}
\left( \textstyle 1-\frac{\mu }{\sqrt{\mu ^2+2 \rho  \sigma ^2}} \right)
>0,
\end{align}%test 
and
\begin{align}%test 
\partial \hat{c}(d) /\partial \rho =
- q_0  (1+h \rho ) e^{\rho h}
+\frac{1}{2}
\left(
\frac{\mu ^2+\rho  \sigma ^2-\mu  \sqrt{\mu ^2+2 \rho  \sigma ^2}}{\rho ^2 \sqrt{\mu ^2+2 \rho  \sigma ^2}}
\right).
\end{align}%test 
In the latter expression,
the first term is negative (the discounting bias is reinforced),
whereas
the second term is positive (the precautionary bias is attenuated).
Thus, we get the same effects encountered in the case of the geometric Brownian motion.

%%%%%%%%%%%%%%%%%%%%%%%%
\subsection{Simulations}

On Figure~\ref{fig:arithmetic-b-cost} (Left),
$b:= \hat{c}(d)- d$ is given as a function of $\sigma$,
for two contrasted values of $h$ ($1$ and $8$ years).
The other parameters are:
$\rho= 0.08$, $\mu =300$, with an initial demand of 10,000~MW and
demand, committed capacity, and installed capacity all equal at date $0$
($D_0=C_0=K_0$).

Figure~\ref{fig:arithmetic-b-cost} (Left) shows that
the impact of the time-to-build  with these values 
is much more important than the impact of uncertainty.
Indeed, numerically,
$\partial \hat{c}(d) /\partial h$ is of the order of 300
whereas
$\partial \hat{c}(d) /\partial \sigma$ is of the order of $-1$.
The first effect largely dominates.

By and large, this result is in line with \citet{Bar-Ilan02}.
In their setting, increasing the time-to-build from one year to eight years
reverses the relation between uncertainty and investment,
which is possible only because they are not separable.
Specifically, for a long delay, 
an increase in uncertainty hastens investments but decreases their level.
But, these effects are very small (\citet[pp. 85, Figure~2]{Bar-Ilan02}).

The excess of committed capacity does not imply
that the system will hold an excess of installed capacity.
In fact, the reverse is observed in Figure~\ref{fig:arithmetic-b-cost} (Right).
In the case of a delay of eight years, an excess of committed capacity as measured
by the value of $b$ is 1,873~MW.
But in eight years, the demand will grow on average 2,400~MW,
which clearly indicates that the optimal strategy is to avoid excess installed capacity.

\begin{figure}[ht!]
\begin{center}
\includegraphics[width=0.35\paperwidth]{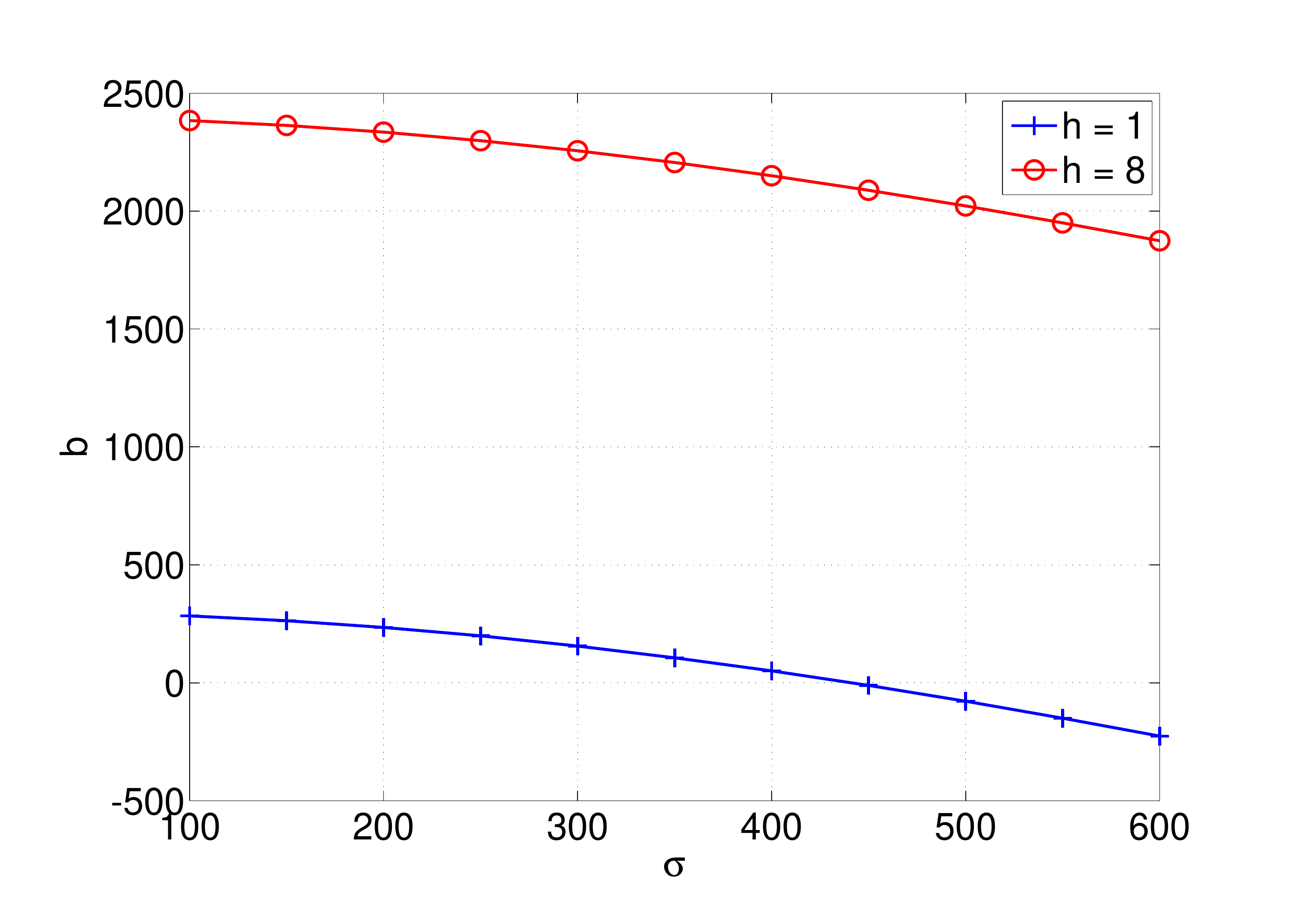}
\includegraphics[width=0.35\paperwidth]{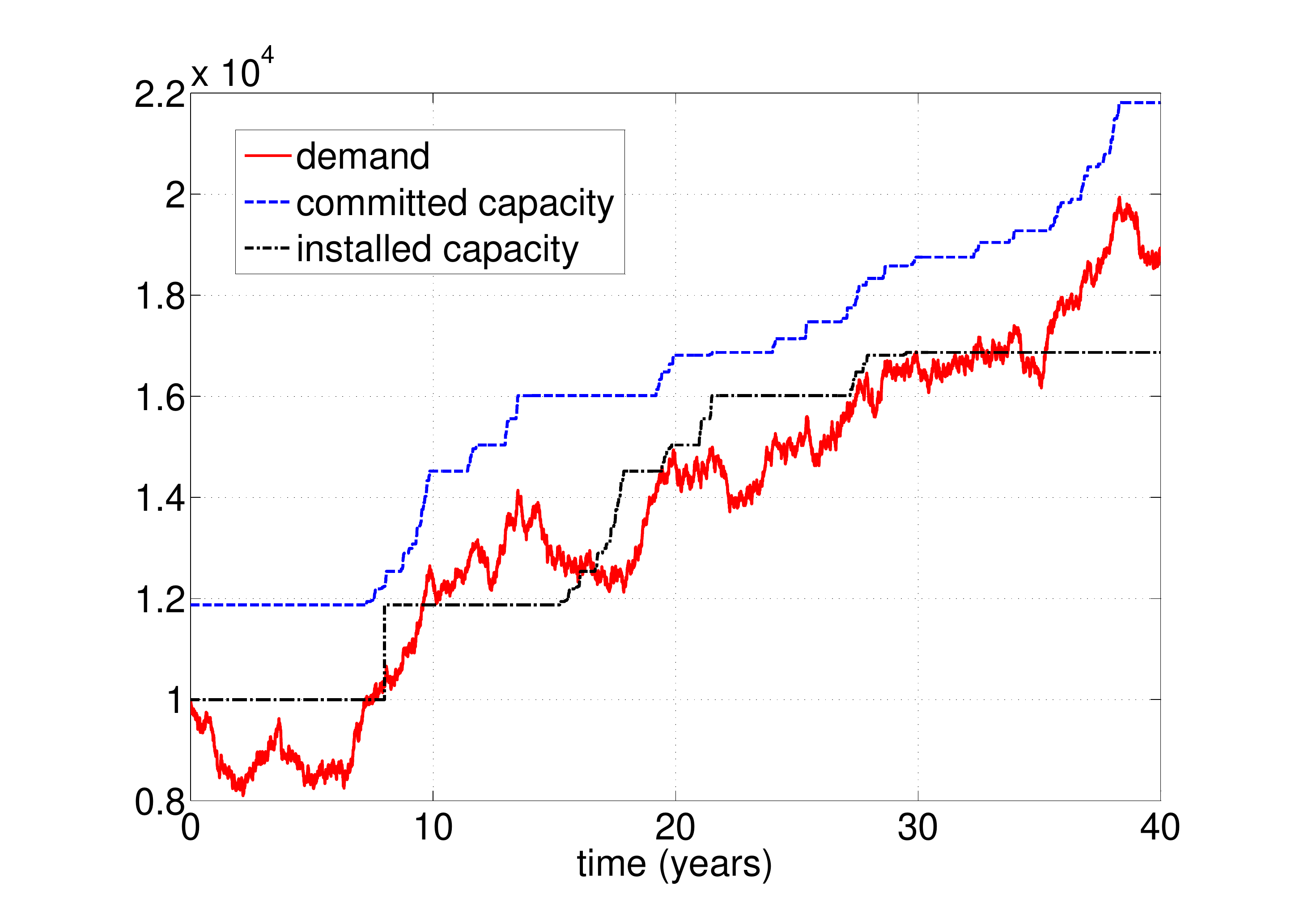}
\caption{
(Left)  $\hat{c}(d)-d$ as a function of $\sigma$ for two values of the time-to-build,
        $h=1$ (blue crosses) and $h=8$ (red circles).
(Right) Demand, committed and installed capacity behavior for an eight-year delay,
        $\sigma = 600$ MW$\cdot$year$^{1/2}$.}
      \label{fig:arithmetic-b-cost}
\end{center}
\end{figure}

%%%%%%%%%%%%%%%%%%%%%%%%%%%%%%%%%%%%%%%%%%%%%%%%%%%%%%%%%%%%%%%%%%%%%%%%%%%%%%%%%%%%%%%%%
%%%%%%%%%%%%%%%%%%%%%%%%%%%%%%%%%%%%%%%%%%%%%%%%%%%%%%%%%%%%%%%%%%%%%%%%%%%%%%%%%%%%%%%%%
\section{Proof of Proposition \ref{prop:competitive}} \label{sec:competitive}

Let $(c,d)\in\mathcal{S}$.
We prove first that
\begin{align}    \label{lemma1}
v_c(c,d)=&
\mathbb{E}\left[\int_0^{+\infty} e^{-\rho t} g_c(C^{c,*}_t,D^d_t)\ud t\right],
\end{align}
where $C^{*}$ is the optimal state process associated to the optimal control $I^*$
provided by Theorem \ref{prop:structure}.
Let $I^*\in\mathcal{I}$ be optimal for $(c,d)$.
Therefore,
\begin{align}          \label{mdf2}
\frac{G(c+\varepsilon,d;I^*)-G(c,d;\ I^*)}{\varepsilon}
\geq
\frac{v(c+\varepsilon,d)-v(c,d)}{\varepsilon}.
\end{align}
On the other hand,
\begin{align}          \label{mdf}
\frac{G(c+\varepsilon,d;I^*)-G(c,d:I^*)}{\varepsilon} =
\mathbb{E} \left[
           \int_0^{+\infty} e^{-\rho t}\
           \frac{g(C_t^{c,I^*}+\varepsilon,D^d_t)-g(C_t^{c,I^*},D^d_t)}{\varepsilon} \ud t
           \right].
\end{align}
Taking the limsup in \eqref{mdf2} and taking into account \eqref{mdf}, we get
\begin{align}           \label{plsd}
\limsup_{\varepsilon\downarrow 0} \frac{v(c+\varepsilon,d)-v(c,d)}{\varepsilon}
\leq
\mathbb{E} \left[ \int_0^{\infty} e^{-\rho t} g_c(C^{c,*}_t,D^d_t)\ud t \right].
\end{align}
On the other hand, arguing symmetrically with $c-\varepsilon$, we get
\begin{align}            \label{plsd2}
\liminf_{\varepsilon\downarrow 0} \frac{v(c,d)-v(c-\varepsilon,d)}{\varepsilon}
\geq
\mathbb{E} \left[ \int_0^{+\infty} e^{-\rho t} g_c(C^{c,*}_t,D^d_t)\ud t \right].
\end{align}
Therefore, \eqref{plsd} and \eqref{plsd2} assert \eqref{lemma1}.

Because of equation \eqref{lemma1}, $v_c\geq -q_0$, the definition of $\mathcal{A}$, and
because of
\begin{align}%test 
g_c(C^{c,*}_t,D^d_t)= e^{-\rho h}\mathbb{E}\left[c-D^d_h\right]; 
\end{align}%test 
the quadratic surplus application based on equation \eqref{eq:demand-price}
as an expected discounted revenue (price minus marginal cost) yields the result.

%%%%%%%%%%%%%%%%%%%%%%%%%%%%%%%%%%%%%%%%%%%%%%%%%%%%%%%%%%%%%%%%%%%%%%%%%%%%%%%%%%%%%%%%%
%%%%%%%%%%%%% Bibliography
\begin{small}

\bibliographystyle{plainnat}

\end{small}

\end{document}